\documentclass[english]{article}
\usepackage[T1]{fontenc}
\usepackage[latin1]{inputenc}
\usepackage{geometry}
\geometry{verbose,tmargin=0cm,bmargin=4cm,lmargin=2cm,rmargin=2cm}
\usepackage{float}
\usepackage{mathrsfs}
\usepackage{amsmath}
\usepackage{amssymb}
\usepackage{graphicx}
\usepackage{subfigure}

\def\ou{Ornstein-Uhlenbeck\xspace}

\makeatletter

\floatstyle{ruled}
\newfloat{algorithm}{tbp}{loa}
\providecommand{\algorithmname}{Algorithm}
\floatname{algorithm}{\protect\algorithmname}



\usepackage{amsthm}

\usepackage{mathrsfs}

\usepackage{amsfonts}

\usepackage{epsfig}

\usepackage{bm}

\usepackage{mathrsfs}

\usepackage{enumerate}

\@ifundefined{definecolor}{\@ifundefined{definecolor}
 {\@ifundefined{definecolor}
 {\usepackage{color}}{}
}{}
}{}

\usepackage{subfig}\usepackage[all]{xy}

\newcommand{\bbR}{\mathbb R}

\textwidth   5.5in \textheight  9in \oddsidemargin  0.3in
\evensidemargin 0.3in \topmargin 0in
\newtheorem{theorem}{Theorem}[section]

\newtheorem{rem}{Remark}[section]
\newtheorem{prop}{Proposition}[section]

\newtheorem{ass}{Assumption}[section]
\newcounter{hypA}
\newenvironment{hypA}{\refstepcounter{hypA}\begin{itemize}
  \item[({\bf A\arabic{hypA}})]}{\end{itemize}}

\newcommand{\Exp}{\mathbb{E}}

\newcommand{\bbE}{\mathbb{E}}
\newcommand{\bbP}{\mathbb{P}}

\newcommand{\cA}{\mathcal{A}}

\newcommand{\cO}{\mathcal{O}}

\newcommand{\cB}{\mathcal{B}}

\newcommand{\cU}{\mathcal{U}}
\newcommand{\cV}{\mathcal{V}}

\newcommand{\heta}{\widehat{\eta}}

\newcommand{\cC}{\mathcal{C}}

\usepackage{xspace}
\usepackage{tabu}
\usepackage{booktabs}

\def\gbm{GBM\xspace}

\def\nlm{NLM\xspace}
\def\ou{OU\xspace}

\def\sde{SDE\xspace}
\def\bbR{\mathbb{R}}
\def\calL{\mathcal{L}}
\def\calN{\mathcal{N}}

\usepackage{babel}\date{}

\usepackage{babel}

\makeatother

\usepackage{babel}
\begin{document}

\begin{center}

{\Large \textbf{Multilevel Particle Filters: Normalizing Constant Estimation}}

\vspace{0.5cm}

BY AJAY JASRA$^{1}$, KENGO KAMATANI$^{2}$, PRINCE PEPRAH OSEI$^{1}$ \& YAN ZHOU$^{1}$

{\footnotesize $^{1}$Department of Statistics \& Applied Probability,
National University of Singapore, Singapore, 117546, SG.}
{\footnotesize E-Mail:\,}\texttt{\emph{\footnotesize staja@nus.edu.sg; op.peprah@u.nus.edu; stazhou@nus.edu.sg}}\\
{\footnotesize $^{2}$Graduate School of Engineering Science, Osaka University, Osaka, 565-0871, JP.}
{\footnotesize E-Mail:\,}\texttt{\emph{\footnotesize kamatani@sigmath.es.osaka-u.ac.jp}}
\end{center}

\begin{abstract}
In this article we introduce two new
estimates of the normalizing constant (or marginal likelihood)
for partially observed diffusion (POD) processes, with discrete observations. 
One estimate is biased but non-negative and the other is unbiased
but not almost surely non-negative.
Our method uses the
multilevel particle filter of \cite{ourmlpf}. We show that, under assumptions, for Euler discretized PODs
and a given $\varepsilon>0$
 in order to obtain a mean square error (MSE)
of $\mathcal{O}(\varepsilon^2)$ one requires a work of $\mathcal{O}(\varepsilon^{-2.5})$ 
for our new estimates
versus a standard particle filter that requires a work
of $\mathcal{O}(\varepsilon^{-3})$. Our theoretical results are supported by numerical simulations.\\
\textbf{Key words:} Filtering; Diffusions; Particle Filter; Multilevel Monte Carlo
\end{abstract}

\section{Introduction}

We consider the filtering problem for partially observed diffusion processes, with discrete observations,
in particular we focus on the online estimation of the associated normalizing constant, or marginal likelihood
as data arrive. This class of problems has a wide number of applications, including finance, economics
and engineering; see for instance \cite{Cappe_2005}. Indeed, the marginal likelihood is a key quantity
for instance in model selection.

The framework will be described precisely in the next section, but essentially one considers
sequentially approximating probability measures $\{\heta_{n}\}_{n\geq 1}$ on a common  
space $\mathbb{R}^d$; assume that the probabilities have common dominating
$\sigma-$finite-measure $dx$. One is thus interested in computing, for many $\heta_n-$integrable
measurable, real-valued functions $\varphi$:
$$
\bbE_{\heta_{n}}[\varphi(X)] = \int_{\mathbb{R}^d} \varphi(x) \heta_n(x)dx\quad n\geq 1.
$$
For the specific problem of interest, for each $\{\heta_{n}\}_{n\geq 1}$ one can
only hope to sample from an approximation of which we assume that there $l=0,1,\dots$ of them.
That is, for $n$ fixed there is a sequence of approximations $\heta_{n}^0,\heta_{n}^1,\dots$,
where we assume that $\heta_{n}^0$ is a `poor' but fast (in some abstract computational sense) approximation of $\heta_{n}$
and as $l$ grows the approximations are more accurrate but slow. In our context the approximation is associated to time discretization
of the partially observed diffusion process.
In order to perform estimation for such models, even the approximations, one often has to resort to numerical methods such as particle filters; see e.g.~\cite{doucet_johan}.

The multilevel Monte Carlo (MLMC) framework  \cite{Giles08, giles_acta,heinrich2001multilevel}
allows one to leverage in an optimal way 
the nested problems arising in this context, hence minimizing the necessary cost to
obtain a given level of mean square error.  
In particular, the  multilevel Monte Carlo method seeks to sample from
$\heta_n^0$ as well as 
a sequence of coupled pairs $(\heta_n^1,\heta_n^2),\dots, (\heta_{n}^{L-1},\heta_n^{L})$
for some prespecified $L\geq 1$
and using a collapsing sum representation of $\bbE_{\heta_n^{L}}[\varphi(X)]$. Then, for some problems, using a 
suitable trade off of computational effort, one can reduce the amount of work, relative to i.i.d.~sampling 
from $\heta_n^{L}$ and using Monte Carlo integration, for a given amount of error.
However, we are concerned with the scenario where such independent sampling is not possible
and that one seeks to perform recursive estimation.

The former problem, that is the use of MLMC for recursive estimation was addressed in
\cite{ourmlpf}. That article developed a particle framework that explicitly uses the MLMC idea.
The authors showed that for the estimation of the filter, under assumptions and a particular context, the 
work to obtain a MSE of $\mathcal{O}(\varepsilon^2)$ is $\mathcal{O}(\varepsilon^{-2.5})$ versus
$\mathcal{O}(\varepsilon^{-3})$ for an ordinary particle filter approximating $\{\heta_{n}^L\}_{n\geq 1}$.
In this article we further extend the framework to consider the estimation of the normalizing constant.
We introduce two new estimators of the normalizing constant.
One estimate is biased but non-negative and the other is unbiased
but not almost surely non-negative.
We show that, under assumptions, in order to obtain a mean square error (MSE)
of $\mathcal{O}(\varepsilon^2)$ (for both new estimators) one requires a work of $\mathcal{O}(\varepsilon^{-2.5})$ versus a standard particle filter that requires a work
of $\mathcal{O}(\varepsilon^{-3})$. Our results do not consider the time parameter.

This paper is structured as follows.
In Section \ref{sec:setup} we give the model, filtering problem and the approximation of the model
that is considered. In Section \ref{sec:mlpf} the 
ML method is briefly reviewed, the MLPF is described and our new estimates given.
Section \ref{sec:theo} gives our theoretical results as well as a cost analysis of our
new estimator. Section \ref{sec:numerics} features numerical simulations. The appendix contains
technical proofs which support the conclusions in Section \ref{sec:theo}.

\section{Set Up}
\label{sec:setup}

\subsection{Model}

We consider the following partially-observed diffusion process:
\begin{eqnarray}
dX_t & = & a(X_t)dt + b(X_t)dW_t
\label{eq:sde}
\end{eqnarray}
with $X_t\in\mathbb{R}^d$, $t\geq 0$, $X_0$ given and $\{W_t\}_{t\in[0,T]}$ a Brownian motion of appropriate dimension. 
The following assumptions will be made on the diffusion process.
\begin{ass} The coefficients $a^j, b^{j,k} \in C^2$, for $j,k= 1,\ldots, d$. 
Also, $a$ and $b$ satisfy 
\begin{itemize}
\item[{\rm (i)}] {\bf uniform ellipticity}: $b(x)b(x)^T$ is uniformly positive definite;
\item[{\rm (ii)}] {\bf globally Lipschitz}:
there is a $C>0$ such that 
$|a(x)-a(y)|+|b(x)-b(y)| \leq C |x-y|$ 
for all $x,y \in \bbR^d$; 
\item[{\rm (iii)}] {\bf boundedness}: $\bbE |X_0|^p < \infty$ for all $p \geq 1.$
\end{itemize}
\label{ass:diff}
\end{ass}
Notice that (ii) and (iii) together imply that $\bbE |X_n|^p < \infty$ for all $n$.

It will be assumed that the data are 
regularly spaced (i.e.~in discrete time) observations 
$y_1,\dots,y_{n}$, $y_k \in \bbR^m$.
It is assumed that conditional on $X_{k\delta}$, for discretization $\delta>0$,
$Y_k$ is independent of all other random variables with density $G(x_{k\delta},y_k)$.
For simplicity of notation let $\delta=1$ (which can always be done by rescaling time), so $X_k = X_{k\delta}$.
The joint probability density of the observations and the unobserved diffusion at the observation times 
is then
$$
\prod_{i=1}^n G(x_{i},y_{i})Q^\infty(x_{(i-1)},x_{i}), 
$$
where 
$Q^\infty(x_{(i-1)},x)$ 
is the transition density of the diffusion process as a function of $x$, 
i.e. the density of the solution
$X_1$ of Eq. \eqref{eq:sde} at time $1$
given initial condition 
$X_0=x_{(i-1)}$.

%

For $k\in\{1,\dots,n\}$, the objective is to approximate the filter
$$
\heta_k^{\infty}(x_{k}|y_{1:k}) = \frac{\int_{\mathbb{R}^{(k-1)d}} \prod_{i=1}^k G(x_{i},y_{i})Q^\infty(x_{(i-1)},x_{i}) dx_{1:k-1}}{\int_{\mathbb{R}^{kd}} \prod_{i=1}^k G(x_{i},y_{i})Q^\infty(x_{(i-1)},x_{i}) dx_{1:k}}.
$$
Note that we will use $\heta_k^{\infty}$
as the notation for measure and density, with the use clear from the context.
It is also of interest,
as is the focus of this article, to estimate the normalizing constant,
or marginal likelihood 
$$
p_k^{\infty}(y_{1:k}) = \int_{\mathbb{R}^{kd}} \prod_{i=1}^k G(x_{i},y_{i})Q^\infty(x_{(i-1)},x_{i}) dx_{1:k}.
$$

\subsection{Approximation}

There are several issues associated to the approximation of the filter
and marginal likelihood, sequentially in time. Even if one knows
$Q^\infty$ pointwise, up-to a non-negative unbiased estimator, 
and/or can sample exactly from the associated law, 
advanced computational
methods, such as particle filters (e.g.~\cite{doucet_johan,fearn}), have to be adopted in order to estimate the filter.
  In the setting considered in this paper, it is assumed that one cannot
\begin{itemize}
\item{evaluate $Q^\infty$ pointwise, up-to a non-negative unbiased estimator}
\item{sample from the associated distribution of $Q^\infty$.}
\end{itemize}
$Q^\infty$ and its distribution must be approximated by some discrete time-stepping method \cite{KlPl92}.

It will be assumed that the diffusion process is 
approximated by a time-stepping method for time-step $h_l=2^{-l}$. 
For simplicity and illustration, Euler's method \cite{KlPl92} will be considered.
However, the results can easily be extended and the theory will be presented more generally.
In particular, 
\begin{eqnarray}
\label{eq:euler1step}
X^l_{k,(m+1)} & = & X^l_{k,m} + {h_l} a(X^l_{k,m}) + \sqrt{h_l} b(X^l_{k,m}) \xi_{k,m}, \\
\xi_{k,m} & \stackrel{\textrm{i.i.d.}}{\sim} & \mathcal{N}_d(0,I_d) 
\nonumber\end{eqnarray}
for $m=0,\dots, k_l$, where 
$k_l=2^l$ and $\mathcal{N}_d(0,I_d)$ is the $d-$dimensional normal distribution with mean zero and covariance the identity (when $d=1$ we omit the subscript). 
The numerical scheme gives rise to its own transition density between observation times 
$Q^l(x_{(k-1)},x )$, which is the density of $X^l_{(k-1),k_l}=X^l_{k,0}$ 
given initial condition $X^{l}_{(k-1),0}=x_{(k-1)}$.

\subsection{Monte Carlo Approximation at Time 1}

Suppose one aims to approximate the expectation of 
$\varphi \in \cB_b(\bbR^d)$ (the class of bounded, measurable and real-valued functions on $\mathbb{R}^d$).
Let $\heta_1^l(\varphi) := \bbE [\varphi(X_1^l)]$ for $l=0,\dots, \infty$.
That is, the expectation of $\varphi$ w.r.t.~multiple approximations of the fillter (levels) at time 1.
For a given $L$, if it were feasible, the
Monte Carlo approximation
of $\heta_1^\infty(\varphi)$ by 
$$
\heta_1^{L,N}(\varphi) = \frac{1}{N}\sum_{i=1}^N \varphi({X}^{L,i}_1), \qquad {X}^{L,i}_1 \stackrel{\textrm{i.i.d.}}{\sim} \heta_1^L\ ,
$$
has mean square error (MSE) given by 
 \begin{equation}
 \bbE | \heta_1^{L,N}(\varphi) -  \heta_1^\infty(\varphi) |^2 =   
  \underbrace{
\frac{\heta_1^L((\varphi-\heta_1^L(\varphi))^2)}{N}
}_{\rm variance} +
    \underbrace{| \heta_1^{L}(\varphi) -  \heta_1^\infty(\varphi) |^2}_{\rm bias}.
 \label{eq:mse}
 \end{equation}
 If one aims for $\cO(\varepsilon^2)$ MSE with optimal cost, then one must balance these two terms.  

For $l=0,1,\ldots, L$, the hierarchy of time-steps $\{h_l\}_{l=0}^L$ 
gives rise to a hierarchy of transition densities $\{Q^l\}_{l=0}^L$.  In 
some cases, it is well-known that the multilevel Monte Carlo (MLMC) method \cite{Giles08, heinrich2001multilevel} 
can reduce the cost to obtain a given level of mean-square error (MSE) \eqref{eq:mse}.
The description of this method and its extension to the particle filter setting of \cite{ourmlpf} will be the topic of the next 
section.

\section{Multilevel Particle Filters}
\label{sec:mlpf}

In this section, the multilevel particle filter will be discussed
and the contribution of this article, an unbiased ML estimator of the normalizing constant will be given.

\subsection{Multilevel Monte Carlo}
\label{ssec:mlmc}

The standard multilevel Monte Carlo (MLMC) framework \cite{Giles08} is described in the context
of approximating the filter at time 1. For $L\geq 1$ given, it is assumed for pedagogical purposes, that one can obtain samples
from $\heta_1^0$ and the couples $(\heta_1^0,\heta_1^1),\dots,(\heta_1^{L-1},\heta_1^{L})$, even though this is not possible in general.
The scenario described is somewhat abstract, but helps to understand the ML method in the context of the article.

The MLMC method begins with asymptotic estimates for 
weak and strong error rates, and the associated cost.  
In particular, assume that 
there are $\alpha, \beta, \gamma>0$, with $X_1^{\infty}\sim\heta_1^{\infty}$ and not necessarily independently $(X_{1,1}^l,X_{1,2}^l)\sim
(\heta_1^{l},\heta_1^{l-1})$
\begin{itemize}
\item[($B_{l}$)] $\bbE [ \varphi(X_{1,1}^{l}) - \varphi(X_{1}^{\infty}) ] = \cO(h_l^\alpha)$;
\item[($V_{l}$)] $\bbE [ |\varphi(X_{1,1}^{l}) - \varphi(X_{1,2}^{l})|^2 ] = \cO(h_l^\beta)$;
\item[($C_{l}$)] COST$(X_{1,1}^l,X_{1,2}^l) = \cO(h_l^{-\gamma})$,
\end{itemize}
where COST denotes the computational effort to obtain one sample $X_{1,1}^l,X_{1,2}^l$,   
and $h_l$ is the grid-size of the numerical method, 
for example as given in \eqref{eq:euler1step}.  In particular, for Euler method, $\alpha=\beta=\gamma=1$.
In general $\alpha\geq\beta/2$, as the choice $\alpha=\beta/2$ is always possible, by Jensen's inequality.

Recall that in order to minimize the effort to obtain a given MSE, one must balance the terms in 
\eqref{eq:mse}.
Based on ($B_{l}$) above, a bias error proportional to $\varepsilon$ will require 
\begin{equation}
\label{eq:ell}
L \propto 
-\log(\varepsilon)/\log(2)\alpha.
\end{equation}
Hence, the associated cost ($C_{L}$), in terms of $\varepsilon$,
for a given sample is $\cO(\varepsilon^{-\gamma})$.
Furthermore, the necessary number of samples to obtain a variance proportional to $\varepsilon^2$ for this 
standard single level estimator is given by $N \propto \varepsilon^{-2}$ following from standard calculations.
So the total cost to obtain a mean-square error tolerance of $\cO(\varepsilon^2)$ is: 
$\#$samples$\times($cost$/$sample)$=$total cost$\propto \varepsilon^{-2 - \gamma}$.
To anchor to the particular example of the Euler-Marayuma method, 
the total cost is $\cO(\varepsilon^{-3})$.

The idea of MLMC is to approximate the ML identity:
$$
\heta^L_1(\varphi) = \sum_{l=0}^{L-1} \{\heta^{l}_1(\varphi)-\heta^{l-1}_1(\varphi)\}
$$
with $\heta^{-1}_1(\varphi)\equiv 0$.
Let $N_0,\dots,N_L\in\mathbb{N}$ and
for $l\in\{0,\dots,L\}$, $(X_{1,1}^{l,i},X_{1,2}^{l,i})_{i=1}^{N_l}$ be i.i.d.~samples from the couple $(\heta_1^l,\heta_{l-1}^1)$
with the convention that for $i\in\{1,\dots,N_0\}$ $X^{0,i}_{1,2}$ is null. At this stage `couple' is abstract, in the sense that 
one wants  $(X_{1,1}^{l,i},X_{1,2}^{l,i})$ to be correlated such that condition ($V_{l}$) holds.

One can approximate the $l^{th}$ summand of the ML identity as
\begin{equation*}
Y_{l}^{N_l} (\varphi) := \frac{1}{N_l} \sum_{i=1}^{N_l}\{ \varphi(X_{1,1}^{l,i}) - \varphi(X_{1,2}^{l,i})\}.
\end{equation*}
The multilevel estimator is a telescopic sum of such unbiased increment estimators, 
which yields an unbiased estimator of $\eta^L_1(\varphi)$.  It can be  
defined in terms of its empirical measure as
\begin{equation*}
\heta_1^{L,{\rm Multi}} (\varphi) := \
\sum_{l=0}^L  Y_{l}^{N_l} (\varphi) \ , 
\end{equation*}
under the convention that $\varphi(X^{0,i}_{1,2})\equiv 0$.

The mean-square error of the multilevel estimator is given by
\begin{equation}
\begin{split}
\bbE \left \{ \heta_1^{L,{\rm Multi}} (\varphi) - \heta_1^\infty(\varphi) \right \}^2 & =  
\underbrace{ \sum_{l=0}^L \bbE \left \{ Y^{N_l}_{l}(\varphi) - [\heta_1^l(\varphi) - \heta_1^{l-1}(\varphi)] \right \}^2}_{\rm variance}  \\
&+ 
\underbrace{\{\heta_1^L(\varphi) - \heta_1^\infty(\varphi)\}^2}_{\rm bias}  .
\label{eq:mseml}
\end{split}
\end{equation}
The key observation is that the bias is given by the finest level, 
whilst the variance is decomposed into a sum of variances of the increments, which is of the form
$\cV = \sum_{l=0}^L V_l N_l^{-1}$.
By condition $(V_l)$ above, the variance of the $l^{th}$ increment has the form $V_l N_l^{-1}$ and $V_l = \cO(h_l^\beta)$.
 The total cost is given by the sum $\cC=\sum_{l=0}^L C_l N_l$.
Based on ($V_{l}$) and ($C_{l}$) above, 
optimizing $\cC$ for a fixed $\cV$ yields that $N_l = \lambda^{-1/2} 2^{-(\beta+\gamma)l/2}$, 
for Lagrange multiplier $\lambda$. 
In the Euler-Marayuma case $N_l = \lambda^{-1/2} 2^{-l}$.  
Now, one can see that after fixing the bias to $c\varepsilon$,
one aims to find the 
Lagrange multiplier $\lambda$ such that $\cV 
\approx c^2\varepsilon^2$.  
Defining $N_0=\lambda^{-1/2}$, then $\cV 
= N_0^{-1} \sum_{l=0}^L 2^{(\gamma-\beta)l/2}$, 
so one must have $N_0 \propto \varepsilon^{-2} K(\varepsilon)$, where $K(\varepsilon)=\sum_{l=0}^L 2^{(\gamma-\beta)l/2}$,
and the $\varepsilon$-dependence comes from $L(\varepsilon)$, as defined in \eqref{eq:ell}.
There are three cases, with associated $K$, and hence cost $\cC$, given in Table \ref{tab:mlcases}.

\begin{table}[h]
\begin{center}
  \begin{tabular}{ | c || c | c |}
    \hline
    CASE & $K(\varepsilon)$ & $\cC(\varepsilon)$ \\ \hline\hline
    $\beta>\gamma$ & $\cO(1)$ & $\cO(\varepsilon^{-2})$ \\ \hline
    $\beta=\gamma$  & $\cO(-\log(\varepsilon))$ & $\cO(\varepsilon^{-2}\log(\varepsilon)^2)$ \\ \hline
    $\beta<\gamma$  & $\cO(\varepsilon^{(\beta-\gamma)/2})$ & $\cO(\varepsilon^{-2+(\beta-\gamma)})$ \\
    \hline
  \end{tabular}
\end{center}
\caption{The three cases of multilevel Monte Carlo, and associated constant $K(\varepsilon)$ and cost $\cC(\varepsilon)$.}
\label{tab:mlcases}
\end{table}

For example, Euler-Marayuma falls into the case 
($\beta=\gamma$), so that $\cC(\varepsilon)= \cO(\varepsilon^{-2}\log(\varepsilon)^2)$.  In this case, one chooses 
$N_0 = C \varepsilon^{-2} |\log(\varepsilon)| = C 2^{2L} L$, where the purpose of $C$ is 
to match the variance with the bias$^2$, similar to the single level case.


\subsection{Multilevel Particle Filters}
\label{ssec:mlpf}

We now describe the MLPF method of \cite{ourmlpf}. 
The idea will be to run $L+1$ independent coupled particle filters which sequentally target
$\heta_k^0$ and the couples $(\heta_k^0,\heta_k^1),\dots,(\heta_k^{L-1},\heta_k^{L})$
for $k=1,2,\dots$. Each coupled particle filter will be run with $N_l$ samples. The algorithm
approximating $\heta_k^0$ is just the standard particle filter.

\subsection{Coupled Kernel}

In order to describe the MLPF, we need some definitions.
Let $l\geq 0$ be given, associated to the Euler discretization. Define a kernel, 
$M^l: [\bbR^d\times \bbR^d] \times [\sigma(\bbR^d)\times \sigma(\bbR^d)] \rightarrow \bbR_+$, 
where $\sigma(\cdot)$ denotes the $\sigma-$algebra of measurable subsets,
such that $M_{1}^l (x,A) := M^l([x,x'],A\times\bbR^d) 
\int_{A\times\bbR^d} M^l([x,x'],d[y,y']) = \int_A Q^l(x,dy)
= Q^l(x,A)$
and $M_2^l (x',A) := M^l([x,x'],\bbR^d \times A) = Q^{l-1}(x',A)$.

The kernel $M^l$ can be constructed using the following strategy.
First the finer discretization is simulated using \eqref{eq:euler1step} (ignoring the index $k$) with
$X^{l,i}_{0,1}=x_0$, 
for $i\in\{1,\dots, N_l\}$.
Now for the coarse discretization, let 
$X^{l,i}_{0,2}=x_0$ for $i\in\{1,\dots,N_l\}$, let $h_{l-1}=2h_l$ and 
for $m\in\{1,\dots,k_{l-1}\}$ simulate
\begin{equation}
X^{l,i}_{m+1,2} =  X^{l,i}_{m,2} + h_{l-1} a(X^{l,i}_{m,2}) + \sqrt{h_{l-1}} b(X^{l,i}_{m,2}) (\xi^i_{2m}+\xi^i_{2m+1}),  
\label{eq:euler1stepcoarse}
\end{equation}
where $\{\xi^i_{m}\}_{i=1,m=0}^{N_l,k_l}$ are the $i^{th}$ realizations used in the 
simulation of the finer discretization. This procedure will be used below.

\subsection{Identity and Algorithm}

Let $\varphi \in \cB_b(\bbR^d)$ and consider the following decomposition
\begin{eqnarray}
\heta^{\infty}_k(\varphi) & =  & \sum_{l=0}^L (\heta^{l}_k - \heta^{l-1}_k)(\varphi)  +  (\heta^{\infty}_k - \heta^{L}_k)(\varphi)
\label{eq:collapse_sum}
\end{eqnarray}
where  $\eta_m^{-1}(\varphi) :=0$. We will approximate the summands.

The {\bf multilevel particle filter (MLPF)} is given below:

{\bf For} $l=0,1,\dots,L$ and $i=1,\dots, N_l$, draw $(X^{l,i}_{1,1},X_{1,2}^{l,i})\stackrel{\textrm{i.i.d.}}{\sim} M^l((x_0,x_0),\cdot)$
(with the convention that $X_{1,2}^{0,i}$ is null for each $i$ and obvious extention for $M^0$).

{\bf Initialize} $k=1$.  {\bf Do}

\begin{itemize}
\item[(i)] {\bf For} $l=0,1,\dots,L$ and $i=1,\dots, N_l$, draw $(I_{k,1}^{l,i},I_{k,2}^{l,i})$ according to the coupled resampling procedure below. Set
$k = k+1$.
\item[(ii)] {\bf For} $l=0,1,\dots,L$ and $i=1,\dots, N_l$, independently draw $(X^{l,i}_{k,1},X^{l,i}_{k,2})| (x_{k-1,1}^{l,I_{k,1}^{l,i}},
x_{k-1,2}^{l,I_{k,2}^{l,i}})\sim M^l((x_{k-1,1}^{l,I_{k,1}^{l,i}},
x_{k-1,2}^{l,I_{k,2}^{l,i}}),~\cdot~)$;
\end{itemize}

The coupled resampling procedure for
the indices $I^{l,i}_{k,j}$, $j\in\{1,2\}$, is described below. To understand this, set:
\begin{equation}
w_{k,1}^{l,i} = \frac{G(x^{l,i}_{k,1},y_{k})}{\sum_{j=1}^{N_l} G(x^{l,j}_{k,1},y_{k})} \qquad {\rm and} \qquad
w_{k,2}^{l,i} = \frac{G(x^{l,i}_{k,2},y_{k})}{\sum_{j=1}^{N_l} G(x^{l,j}_{k,2},y_{k})}.
\label{eq:weights}
\end{equation}

\begin{itemize}
\item[{\bf a}.] with probability  $\alpha_k^l = \sum_{i=1}^{N_l}w_{k,1}^{l,i}\wedge w_{k,2}^{l,i}$, 
draw $I^{l,i}_{k,1}$ according to
$$
\bbP(I^{l,i}_{k,1}=j) = \frac{1}{\alpha_k^l} (w_{k,1}^{l,j}\wedge w_{k,2}^{l,j}),
\qquad j\in\{1,\ldots,N_l\}.
$$
and let $I^{l,i}_{k,2}=I^{l,i}_{k,1}$.
\item[{\bf b}.] otherwise, draw
$(I^{l,i}_{k,1}, I^{l,i}_{k,2})$ independently according to the probabilities 
\[
\begin{split}
\bbP(I^{l,i}_{k,1}=j) & =  [w_{k,1}^{l,j}-w_{k,1}^{l,j}\wedge w_{k,2}^{l,j}]/(\sum_{s=1}^{N_l}w_{k,1}^{l,s}-w_{k,1}^{l,s}  \wedge w_{k,2}^{l,s}); \\ 
\bbP(I^{l,i}_{k,2}=j) & =  [w_{k,2}^{l,j}-w_{k,1}^{l,j}\wedge w_{k,2}^{l,j}]/(\sum_{s=1}^{N_l}w_{k,2}^{l,s}-w_{k,1}^{l,s}\wedge w_{k,2}^{l,s}), 
\end{split}
\]
for $j\in\{1,\ldots,N_l\}$.
\end{itemize}

\cite{ourmlpf} show that each summand in the first term of \eqref{eq:collapse_sum} can be consistently estimated with:
$$
\sum_{i=1}^{N_l}\left \{ w_{k,1}^{l,i} \varphi(x^{l,i}_{k,1}) - 
w_{k,2}^{l,i} \varphi(x^{l,i}_{k,2}) \right \}.
$$
 In the Euler case, \cite{ourmlpf} show that under assumptions and not considering the time parameter,
how to choose $L$ and $N_{0:L}$ such that 
that for an MSE of $\mathcal{O}(\varepsilon^2)$ the work required is $\mathcal{O}(\varepsilon^{-2.5})$
whereas for a particle filter a MSE of $\mathcal{O}(\varepsilon^2)$ costs $\mathcal{O}(\varepsilon^{-3})$.

\subsection{Estimation of Normalizing Constants}

In the context of estimating $p^{l}(y_{1:k})$ for any fixed $l\geq 0$ and any $k\geq 1$, \cite{ourmlpf}
show that a non-negative unbiased estimator is
$$
\widehat{p}_1^{l,N_l}(y_{1:k}) = \prod_{j=1}^k \Big(\frac{1}{N_l}\sum_{i=1}^{N_l}G(x_{j,1}^{l,i},y_j)\Big).
$$
Note that for any $l\in\{1,\dots,L\}$, $p^{l-1}(y_{1:k})$ can unbiasedly be estimated by 
$$
\widehat{p}_2^{l-1,N_l}(y_{1:k}) = \prod_{j=1}^k \Big(\frac{1}{N_l}\sum_{i=1}^{N_l}G(x_{j,2}^{l,i},y_j)\Big).
$$
Clearly, these estimators do not take advantage of the ML principle. As
$$
p^L(y_{1:k}) = \sum_{l=0}^L\{
p^l(y_{1:k}) - p^{l-1}(y_{1:k})
\}
$$
with $p^{-1}(y_{1:k}) \equiv 0$, one can define the following ML unbiased estimator of $p^{L}(y_{1:k})$
$$
\widehat{p}^{L,N_{0:L}}(y_{1:k}) = 
\sum_{l=0}^L\{
\widehat{p}_1^{l,N_l}(y_{1:k}) - \widehat{p}_2^{l-1,N_l}(y_{1:k})
\}
$$
with $p_2^{-1,N_0}(y_{1:k})\equiv 0$. It is remarked that such an estimator is not almost surely non-negative.

We also consider the biased, but non-negative estimator:
$$
\widehat{p}_1^{0,N_0}(y_{1:k}) \prod_{l=1}^L \frac{\widehat{p}_1^{l,N_l}(y_{1:k})}{\widehat{p}_2^{l-1,N_l}(y_{1:k})}.
$$

\section{Theoretical Results}
\label{sec:theo}

The calculations leading to the results in this section are performed via a Feynman-Kac type representation 
(see \cite{delm:04,delmoral1}) which is detailed in the appendix.  

\subsection{Main Theorems}

For the main Theorem we assume the assumptions (A1-2) in the appendix, which are supposed
such that they hold for each level. This assumption is termed (A). Below $\mathbb{E}$
denotes expectation w.r.t.~the stochastic process that generates the MLPF and 
$\overline{B}_l(n)$ is the (level-dependent) constant associated to
\eqref{eq:b_rec} (in the appendix). 

\begin{theorem}
Assume (A). Then for any $n,L\geq 1$, there exist $\{B_l(n)\}_{0\leq l \leq L}$ with $B_l(n)<+\infty$, such that
for any $N_0,\dots,N_L\geq 1$
$$
\mathbb{E}\Big[\Big(\widehat{p}^{L,N_{0:L}}(y_{1:n})-p^{L}(y_{1:n})\Big)^2\Big] \leq
\sum_{l=0}^L \frac{B_l(n)}{N_l}.
$$
\end{theorem}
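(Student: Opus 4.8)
The plan is to exploit two structural facts: that the $L+1$ coupled particle filters indexed by $l=0,\dots,L$ are run \emph{independently} of one another, and that each increment $\widehat{p}_1^{l,N_l}(y_{1:n})-\widehat{p}_2^{l-1,N_l}(y_{1:n})$ is unbiased for $p^l(y_{1:n})-p^{l-1}(y_{1:n})$. First I would write, using the telescoping identity $p^L(y_{1:n})=\sum_{l=0}^L\{p^l(y_{1:n})-p^{l-1}(y_{1:n})\}$ (with $p^{-1}\equiv 0$),
$$
\widehat{p}^{L,N_{0:L}}(y_{1:n}) - p^L(y_{1:n}) = \sum_{l=0}^L D_l, \qquad D_l := \big(\widehat{p}_1^{l,N_l}(y_{1:n})-\widehat{p}_2^{l-1,N_l}(y_{1:n})\big) - \big(p^l(y_{1:n})-p^{l-1}(y_{1:n})\big).
$$
Since $\widehat{p}_1^{l,N_l}$ and $\widehat{p}_2^{l-1,N_l}$ are unbiased for $p^l$ and $p^{l-1}$ respectively (established earlier in the excerpt), $\bbE[D_l]=0$; and because distinct levels use independent particle systems, the $D_l$ are mutually independent. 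Expanding the square and using $\bbE[D_l D_{l'}]=\bbE[D_l]\bbE[D_{l'}]=0$ for $l\neq l'$ collapses all cross terms, giving $\bbE[(\sum_l D_l)^2]=\sum_{l=0}^L\var(D_l)$. The problem thus reduces to bounding the variance of a single coupled increment by $B_l(n)/N_l$.

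For the per-level bound I would pass to the Feynman-Kac representation of the coupled filter detailed in the appendix. Writing $\widehat{p}_1^{l,N_l}(y_{1:n})=\prod_{j=1}^n \widehat{G}_{j,1}^{l}$ and $\widehat{p}_2^{l-1,N_l}(y_{1:n})=\prod_{j=1}^n \widehat{G}_{j,2}^{l}$ with $\widehat{G}_{j,s}^{l}:=\tfrac1{N_l}\sum_i G(x_{j,s}^{l,i},y_j)$, I would decompose each product telescopically over the time index so that $D_l$ becomes a sum over $j=1,\dots,n$ of martingale-type increments, each of the form (a product of already-computed empirical potentials) $\times$ (a one-step fluctuation of the coupled empirical measure about its conditional mean) $\times$ (the remaining deterministic normalizing factors). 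The coupling---shared Brownian increments through \eqref{eq:euler1stepcoarse} and the maximal-coupling resampling of the MLPF (Section~\ref{ssec:mlpf})---enters precisely in how the two systems' fluctuations correlate, and is what ultimately controls the size of $B_l(n)$.

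The main obstacle, and the technical heart of the appendix, is the $L^2$ control of these one-step fluctuations for the \emph{pair} of interacting systems. For a single standard particle filter this is the classical $\cO(1/N)$ variance bound obtained from Del Moral's $L^p$ inequalities together with boundedness of the potentials; here one must instead bound the second moment of a difference of two coupled empirical measures, carefully tracking the cross terms produced by the coupled resampling step, which is not the usual product-form multinomial resampling. Under Assumption (A) (boundedness of $G$ and the associated stability of the Feynman-Kac semigroup at each level) these fluctuations satisfy a recursion in the time index $k$, namely \eqref{eq:b_rec}, whose solution furnishes a finite, level-dependent constant $\overline{B}_l(n)$. Summing the $n$ time-increment contributions (via Minkowski and Cauchy-Schwarz, using the uniform boundedness of the product prefactors) then yields $\var(D_l)\le B_l(n)/N_l$ with $B_l(n)<+\infty$, and substituting into the level sum completes the proof.
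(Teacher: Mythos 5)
Your proposal is correct and takes essentially the same route as the paper: the paper's proof of this theorem is exactly your first step---unbiasedness of the increment estimators plus independence of the $L+1$ coupled filters kills the cross terms, reducing the MSE to a sum of per-level variances, each bounded via Proposition \ref{prop:main_control}. Your sketch of that per-level bound (time-recursive $L^2$ control of the coupled empirical normalizing constants, culminating in the recursion \eqref{eq:b_rec}) is also what the appendix does, the only cosmetic difference being that the paper argues by induction on $n$ with the $C_2$-inequality rather than an explicit martingale telescoping over time.
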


\begin{proof}
The proof follows easily from Proposition \ref{prop:main_control} in the appendix along with the unbiased property of the estimators.
\end{proof}

We now consider the MSE in the Euler case; recall $h_l=2^{-l}$. We ignore the time parameter $n$.
In this scenario, the bias is $\mathcal{O}(h_L^{-1})$, for $\varepsilon>0$ one should set
$L=\log(\varepsilon^{-1})/\log(2))$ to make the bias squared $\mathcal{O}(\varepsilon^2)$.
Following the work of \cite{ourmlpf} and Remark \ref{rem:cost}, in the Euler case, we have that
$$
\sum_{l=0}^L \frac{B_l(n)}{N_l} \leq C(n)\sum_{l=0}^L \frac{h_l^{1/2}}{N_l}
$$
for some constant $C(n)$. In the Euler case the cost of the algorithm at a given time is $\sum_{l=0}^L N_lh_l^{-1}$,
so one can use a simple constraind optimization procedure: minimizing the cost for a given variance
of $\mathcal{O}(\varepsilon^2)$ to find the values of $N_{0:L}$. One should take
$N_l = C\varepsilon^{-2}h_{l}^{3/4}K_L$ where $K_L=\sum_{l=0}^L h_l^{-1/4}\approx \varepsilon^{-1/4}$
leading to a cost of $C\varepsilon^{-9/4}\sum_{l=0}^L h_l^{-1/4}=C\varepsilon^{-2.5}$.
That is, 
in order to obtain a mean square error (MSE)
of $\mathcal{O}(\varepsilon^2)$ one requires a work of $\mathcal{O}(\varepsilon^{-2.5})$ versus a standard particle filter that requires a work
of $\mathcal{O}(\varepsilon^{-3})$. Our results do not consider the time parameter. If one considers the relative variance, we conjecture that under assumptions (see \cite{cdg:11})
one would need to scale $N_l$ linearly with time; this is left for future work.

For the biased estimator one can combine Proposition \ref{theo:nc_bias_theo} in Appendix \ref{app:bias} with
the above discussion to deduce the same information: in
order to obtain a mean square error (MSE)
of $\mathcal{O}(\varepsilon^2)$ one requires a work of $\mathcal{O}(\varepsilon^{-2.5})$ versus a standard particle filter that requires a work
of $\mathcal{O}(\varepsilon^{-3})$.

\section{Numerical Examples}\label{sec:numerics}

\subsection{Model Settings}

We will illustrate the numerical performance of the MLPF algorithm with a few
examples of the diffusion processes considered in this paper. Recall that, the
diffusions take the form,
\begin{equation*}
  d X_t = a(X_t) dt + b(X_t) d W_t, \qquad X_0 = x_0
\end{equation*}
with $X_t\in\bbR^d$, $t\ge0$ and $\{W_t\}_{t\in[0,T]}$ a Brownian motion of
appropriate dimension. In addition, partial observations $\{y_1,\dots,y_n\}$
are available with $Y_k$ obtained at time $k\delta$, and $Y_k|X_{k\delta}$ has
a density function $G(y_k,x_{k\delta})$. 
Details of each example described below. A summary of settings can be found in
Table~\ref{tab:model}.

\paragraph{Ornstein-Uhlenbeck Process}

First, we consider the following \ou process,
\begin{gather*}
  d X_t = \theta(\mu - X_t) + \sigma d W_t, \\
  Y_k|X_{k\delta} \sim \calN(X_{k\delta}, \tau^2), \qquad \varphi(x) = x.
\end{gather*}
An analytical solution exists for this process and the
exact value of $\Exp[X_{k\delta}|y_{1:k}]$ can be computed using a Kalman
filter. The constants in the example are, $x_0 = 0$, $\delta = 0.5$, $\theta =
1$, $\mu = 0$, $\sigma = 0.5$, and $\tau^2 = 0.2$.

\paragraph{Geometric Brownian Motion}

Next we consider the \gbm process,
\begin{gather*}
  d X_t = \mu X_t + \sigma X_t d W_t, \\
  Y_k|X_{k\delta} \sim \calN(\log X_{k\delta}, \tau^2), \qquad
  \varphi(x) = x,
\end{gather*}
This process 
also admits an analytical solution, by using the transformation $Z_t = \log X_t$.
The constants are, $x_0 = 1$, $\delta = 0.001$, $\mu = 0.02$, $\sigma = 0.2$
and $\tau^2 = 0.01$.

\paragraph{Langevin Stochastic Differential Equation}

Here the \sde is given by 
\begin{gather*}
  d X_t = \frac{1}{2}\nabla\log\pi(X_t) + \sigma W_t, \\
  Y_k|X_{k\delta} \sim \calN(0, \tau^2e^{X_{k\delta}}),\qquad
  \varphi(x) = \tau^2e^x
\end{gather*}
where $\pi(x)$ denotes a probability density function. In this example, we
choose the Student's $t$-distribution with degrees of freedom $\nu = 10$. The
other constants are, $x_0 = 0$, $\delta = 1$, $\sigma = 1$ and $\tau^2 = 1$.

\paragraph{An \sde with a Non-Linear Diffusion Term}

Last, we consider the following \sde,
\begin{gather*}
  d X_t = \theta(\mu - X_t) + \frac{\sigma}{\sqrt{1 + X_t^2}} d W_t, \\
  Y_k|X_{k\delta} \sim \calL(X_{k\delta}, s), \qquad \varphi(x) = x,
\end{gather*}
where $\calL(m,s)$ denotes the Laplace distribution with location $m$ and scale
$s$. The constants are $x_0 = 0$, $\delta = 0.5$, $\theta = 1$, $\mu = 0$,
$\sigma = 1$ and $s = \sqrt{0.1}$. We will call this example $\nlm$ for short
in the remainder of this section. 

\begin{table}
  \begin{tabu}{X[2l]X[2l]X[l]X[2l]X[l]}
    \toprule
    Example & $a(x)$ & $b(x)$ & $G(y;x)$ & $\varphi(x)$ \\
    \midrule
    \ou
    & $\theta(\mu - x)$
    & $\sigma$
    & $\calN(x, \tau^2)$
    & $x$ \\
    \gbm
    & $\mu x$
    & $\sigma x$
    & $\calN(\log x, \tau^2)$~~~~
    & $x$ \\
    Langevin
    & $\frac{1}{2}\nabla\log\pi(x)$~~~
    & $\sigma$
    & $\calN(0, \tau^2e^x)$
    & $\tau^2 e^x$ \\
    \nlm
    & $\theta(\mu - x)$
    & $\frac{\sigma}{\sqrt{1 + x^2}}$
    & $\calL(x, s)$
    & $x$ \\
    \bottomrule
  \end{tabu}
  \caption{Model settings. $\mathcal{L}$ is used to denote a log-normal distribution.}
  \label{tab:model}
\end{table}

\subsection{Simulation Settings}

For each example, we consider estimates at level $L = 1,\dots,8$. For the OU
and GBM processes, the ground truth is computed through a Kalman filter. For
the two other examples, we use results from particle filters at level $L = 9$
as approximations to the ground truth. 

For each level of MLPF algorithm, $N_l = \lfloor N_{0,L} h_l^{(\beta
  + 2 \gamma) / 4} \rfloor$ particles are used, where $h_l = M_l^{-1} = 2^{-l}$
is the width of the Euler-Maruyama discretization; $\gamma$ is the rate of
computational cost, which is $1$ for the examples considered here; and $\beta$
is the rate of the strong error. The value of $\beta$ is $2$ if the
diffusion term $b(x)$ is constant and $1$ in general. The value
$N_{0,L}\propto\varepsilon^{-2}K(\varepsilon)$ is set to $2^{2L}L$ for the
cases where the diffusion term is constant and $2^{(9/4)L}$ otherwise.
Resampling is done adaptively. For the plain particle filters, resampling is
done when ESS is less than a quarter of the particle numbers. For the coupled
filters, we use the ESS of the coarse filter as the measurement of
discrepancy. Each simulation is repeated 100 times.  

\subsection{Results}

The magnitude of the normalizing constants 
typically
grows linearly with $n$ on logarithm
scales. Thus to make a sensible comparison of the variances at different
time points, we multiply the value of $p(y_{1:n})$ or its estimators by $c^n$,
where $c$ is a constant independent of the samples and data. In other words,
the variance and MSE results shown below are up to a multiplicative constant
which only depends on $n$.

We begin by considering the rate $\beta/2$ of the strong error. This rate can
be estimated either by the sample variance of $\hat\varphi_l(n) =
\hat{p}_1^{l,N_l}(y_{1:n}) - \hat{p}_2^{l-1,N_l}(y_{1:n})$, or by $1 - p_l(n)$,
where $p_l(n)$ is the probability of the coupled particles having the same
resampling index at time step $n$. The latter was examined in 
\cite{ourmlpf}
 and the results are identical since we are using
exactly the same model and simulation settings. In \cite{ourmlpf} the authors were
interested in the estimation of the (expectations w.r.t.~the) filter at specific time points, whilst here we are
interested in the normalizing constants. In Figure~\ref{fig:rate} we show the
estimated variance of $\hat\varphi_l(n)$ against $h_l$. The rates are
consistent with previous results. The estimated rates are about $1$ for the OU
and Langevin examples, and $0.5$ for the other two. In addition, the rates are
consistent for different time $n$.

\begin{figure}
 \includegraphics[width=\linewidth]{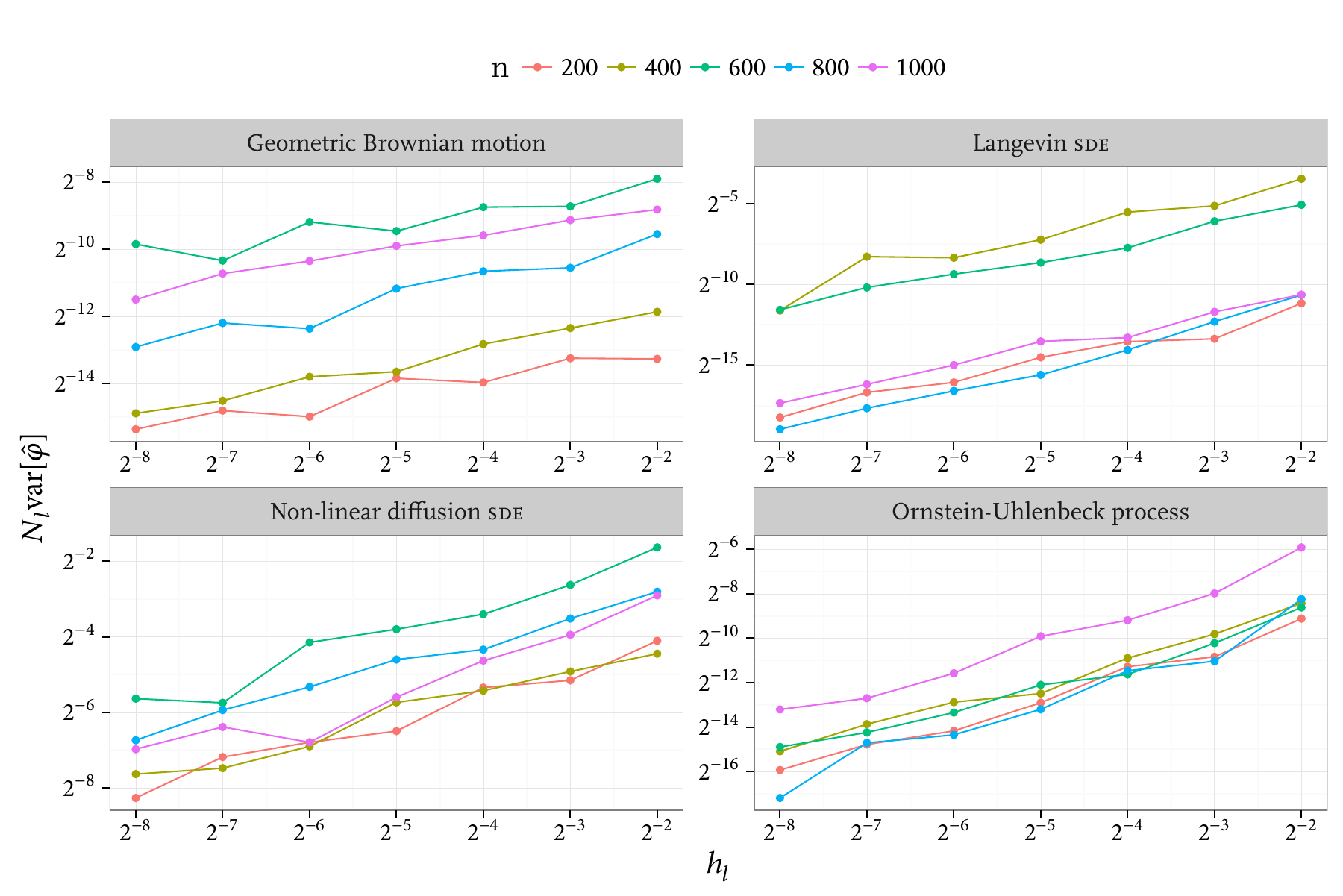}
  \caption{Rate estimates using the variance}
  \label{fig:rate}
\end{figure}

Next the rate of MSE vs.\ cost is examined. We consider the error of
normalizing constant at $n = 1000$. This is shown in Figure~\ref{fig:cost} and
Table~\ref{tab:cost}. Compared with results for estimates for test functions at
specific time points as in \cite{ourmlpf}, our
rates are slightly worse for both the PF and MLPF algorithms. However, they are
still consistent with the theory. Importantly, the MLPF algorithm,
using either the biased and unbiased estimators, shows significant advantage
over PF in all examples.

\begin{figure}
 \includegraphics[width=\linewidth]{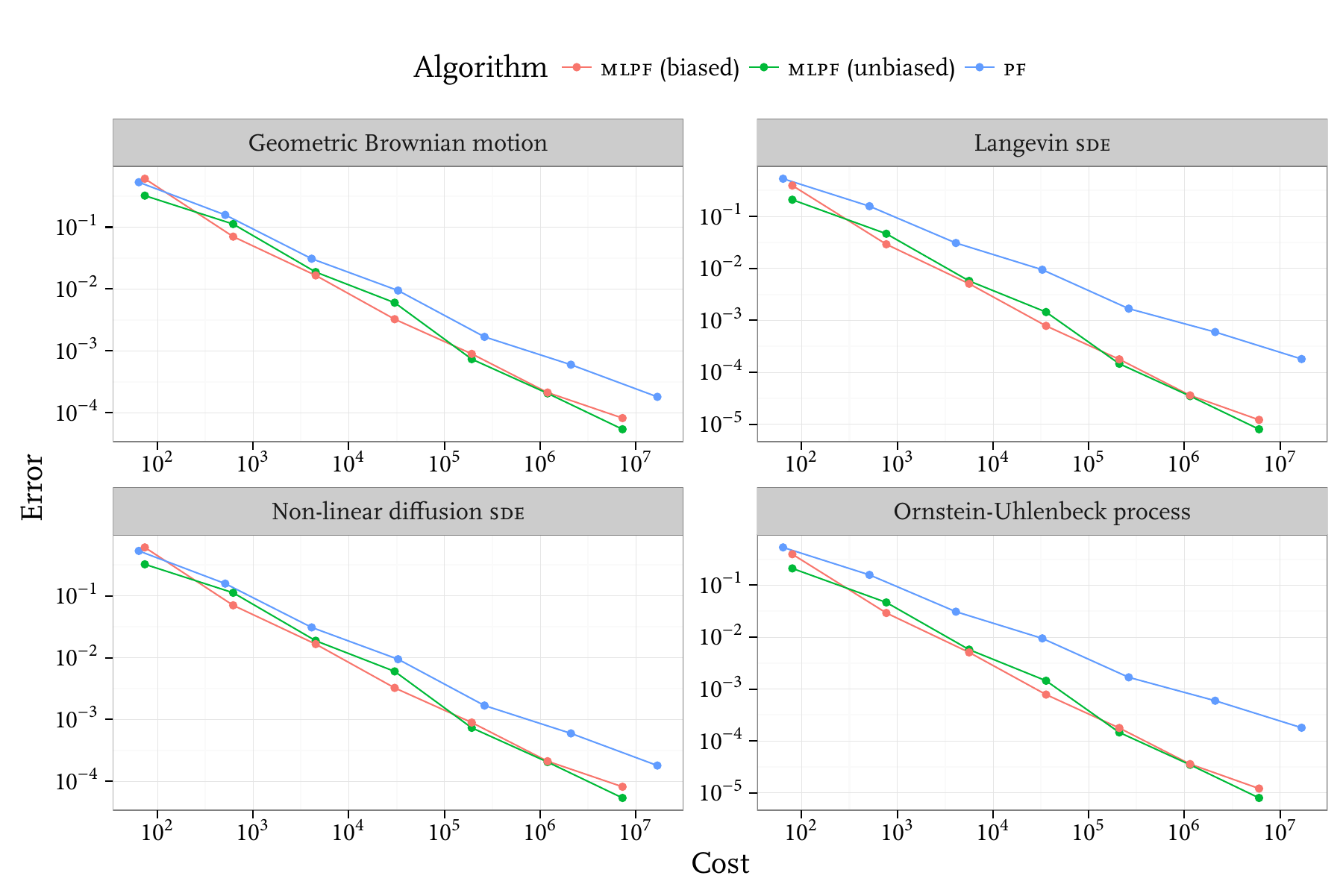}
  \caption{Cost rates}
  \label{fig:cost}
\end{figure}

\begin{table}
  \begin{tabu}{X[l]X[r]X[r]X[r]}
    \toprule
    Example & PF & MLPF (unbiased) & MLPF (biased) \\
    \midrule
    OU       & $-1.532$ & $-1.125$ & $-1.119$ \\
    GBM      & $-1.567$ & $-1.224$ & $-1.231$ \\
    Langevin & $-1.511$ & $-1.053$ & $-1.043$ \\
    NLM      & $-1.528$ & $-1.301$ & $-1.310$ \\
    \bottomrule
  \end{tabu}
  \caption{Cost rate $\log \text{cost} \sim \log\text{MSE}$}
  \label{tab:cost}
\end{table}

\subsubsection*{Acknowledgements}
AJ, PPO \& YZ were supported by an AcRF tier 2 grant: R-155-000-161-112. AJ is affiliated with the Risk Management Institute, the Center for Quantitative Finance 
and the OR \& Analytics cluster at NUS. KK \& AJ acknowledge CREST, JST for additionally supporting the research.

\appendix

\section{Set Up}

\subsection{Basic Notations}

 The 
total variation norm 
as $\|\cdot\|_{\textrm{tv}}$. The collection of real-valued Lipschitz functions on a space $E$ is written $\textrm{Lip}(E)$.
For two Markov kernels $M_1$ and $M_2$
on the same space $E$, letting $\cA=\{\varphi: \|\varphi\|\leq 1, \varphi\in\textrm{Lip}(E)\}$
write
$$
|||M_{1}-M_{2}||| := \sup_{\varphi\in \cA}\sup_x |\int_E \varphi(y) M_1(x,dy) - \int_E \varphi(y) M_2(x,dy) |.
$$

Consider a sequence of random variables $(v_n)_{n\geq 0}$ with $v_n=(u_{n,1},u_{n,2})\in \cU\times \cU =: \cV$. 
For $\mu\in\mathcal{P}(\cV)$ (the probability measures on $\cV$) and function $\varphi\in\mathcal{B}_b(\cU)$ (bounded-measurable, real-valued) we will write:
$$
\mu(\varphi_j) = \int_\cV \varphi(u_j) \mu(dv)\qquad j\in\{1,2\}.
$$
Write the $j\in\{1,2\}$ marginals (on $u_j$) of a probability $\mu\in\mathcal{P}(\cV)$ as $\mu_j$.
Define the potentials:
$G_n:\cU\rightarrow\mathbb{R}_+$. 
 Let $\eta_0\in\mathcal{P}(\cV)$ and define
Markov kernels $M_{n}:\cV\rightarrow \mathcal{P}(\cV)$ with $n\geq 1$. It is explictly assumed that for $\varphi\in\mathcal{B}_b(\cU)$ the $j$ marginals satisfy:
\begin{equation}
M_{n}(\varphi_j)(v) = \int_\cV \varphi(u_j') M_n(v,dv') =  \int_\cU \varphi(u_j') M_{n,j}(u_j,du_j').\label{eq:marginal_m}
\end{equation}
We adopt the definition for $(v,\tilde{v})=((u_1,u_2),(\tilde{u}_1,\tilde{u}_2))$ of a sequence of Markov kernels 
$(\bar{M}_n)_{n\geq 1}$, $\bar{M}_n:\cV\times \cV\rightarrow\mathcal{P}(\cV)$
$$
\bar{M}_n((v,\tilde{v}),dv') := M_n((u_1,\tilde{u}_2),dv').
$$
In the main text $\cU=\bbR^d$.

\subsection{Marginal Feynman-Kac Formula}

Given the above notations and defintions we define the $j-$marginal Feynman-Kac formulae:
$$
\gamma_{n,j}(du_n) = \int \prod_{p=0}^{n-1} G_p(u_p) \eta_{0,j}(du_0) \prod_{p=1}^n M_{p,j}(u_{p-1},du_p)
$$
with for $\varphi\in\mathcal{B}_b(\cU)$
$$
\eta_{n,j}(\varphi) = \frac{\gamma_{n,j}(\varphi)}{\gamma_{n,j}(1)}.
$$
One can also define the sequence of Bayes operators, for $\mu\in\mathcal{P}(\cU)$
$$
\Phi_{n,j}(\mu)(du) = \frac{\mu(G_{n-1}M_{n,j}(\cdot,du))}{\mu(G_{n-1})}\qquad n\geq 1.
$$
Recall that for $n\geq 1$, $\eta_{n,j} = \Phi_{n,j}(\eta_{n-1,j})$.

\subsection{Feynman-Kac Formulae for Multilevel Particle Filters}

For $\mu\in\mathcal{P}(\cV)$ define for $u\in \cU$, $v\in \cV$:
\begin{eqnarray*}
G_{n,j,\mu}(u) & = & \frac{G_{n}(u)}{\mu_j(G_{n})} \\
\bar{G}_{n,\mu}(v) & = & G_{n,1,\mu}(u_1) \wedge G_{n,2,\mu}(u_2).
\end{eqnarray*}

Now for any sequence $(\mu_n)_{n\geq 0}$, $\mu_n\in\mathcal{P}(\cV)$, define the sequence of operators $(\bar{\Phi}_n(\mu_{n-1}))_{n\geq 1}$:
$$
\bar{\Phi}_n(\mu_{n-1})(dv_n) = 
$$
$$
\mu_{n-1}(\bar{G}_{n-1,\mu_{n-1}})\frac{\mu_{n-1}(\bar{G}_{n-1,\mu_{n-1}}M_n(\cdot,dv_n))}{\mu_{n-1}(\bar{G}_{n-1,\mu_{n-1}})} +(1-\mu_{n-1}(\bar{G}_{n-1,\mu_{n-1}}))\times
$$
$$
\mu_{n-1}\otimes
\mu_{n-1}
\Big(
\Big[
\frac{G_{n-1,1,\mu_{n-1}}-\bar{G}_{n-1,\mu_{n-1}}}{\mu_{n-1}(G_{n-1,1,\mu_{n-1}}-\bar{G}_{n-1,\mu_{n-1}})}\otimes
\frac{G_{n-1,2,\mu_{n-1}}-\bar{G}_{n-1,\mu_{n-1}}}{\mu_{n-1}(G_{n-1,2,\mu_{n-1}}-\bar{G}_{n-1,\mu_{n-1}})}
\Big]
\bar{M}_n(\cdot,dv_n)
\Big)
$$
Now define $\bar{\eta}_n := \bar{\Phi}_n(\bar{\eta}_{n-1})$ for $n\geq 1$, $\bar{\eta}_0=\eta_0$. The following Proposition is proved in \cite{ourmlpf}:

\begin{prop}\label{prop:marginal}
Let $(\mu_n)_{n\geq 0}$ be a sequence of probability measures on $\cV$ with $\mu_0=\eta_0$ and for each $j\in\{1,2\}$, $\varphi\in\mathcal{B}_b(\cU)$
$$
\mu_{n}(\varphi_j) = \eta_{n,j}(\varphi).
$$
Then:
$$
\eta_{n,j}(\varphi) = \bar{\Phi}_n(\mu_{n-1})(\varphi_j).
$$
In particular $\bar{\eta}_{n,j}=\eta_{n,j}$ for each $n\geq 0$.
\end{prop}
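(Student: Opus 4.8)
The plan is to verify the identity $\bar{\Phi}_n(\mu_{n-1})(\varphi_j)=\eta_{n,j}(\varphi)$ by a direct computation for $j=1$ (the case $j=2$ being entirely symmetric), and then to deduce the final assertion $\bar{\eta}_{n,j}=\eta_{n,j}$ by a straightforward induction on $n$. Throughout I would abbreviate $\mu=\mu_{n-1}$, $G_j=G_{n-1,j,\mu}$, $\bar{G}=\bar{G}_{n-1,\mu}$ and $\alpha=\mu(\bar{G})$. First I would record two elementary facts. Since $G_{n-1,j,\mu}=G_{n-1}/\mu_j(G_{n-1})$, we have $\mu(G_j)=\mu_j(G_{n-1,j,\mu})=1$ for each $j$, and hence $\mu(G_1-\bar{G})=\mu(G_2-\bar{G})=1-\alpha$; these are precisely the two normalizers appearing in the residual factors of $\bar{\Phi}_n$. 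I would also note that, by the marginal property \eqref{eq:marginal_m} together with the definition $\bar{M}_n((v,\tilde{v}),\cdot)=M_n((u_1,\tilde{u}_2),\cdot)$, integrating $\varphi_1$ against either $M_n(v,\cdot)$ or $\bar{M}_n((v,\tilde{v}),\cdot)$ yields the same quantity $M_{n,1}(\varphi)(u_1)$, a function of the first coordinate $u_1$ alone.

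Next I would compute the $\varphi_1$-integral of the two terms of $\bar{\Phi}_n(\mu)$ separately. In the coupled (first) term the prefactor $\mu(\bar{G})$ cancels the denominator, leaving $\int_{\cV}\bar{G}(v)\,M_{n,1}(\varphi)(u_1)\,\mu(dv)$. In the independent (second) term, because $M_{n,1}(\varphi)(u_1)$ depends only on $v$ and not on $\tilde{v}$, the $\tilde{v}$-integral of the factor $(G_2-\bar{G})/\mu(G_2-\bar{G})$ equals $\mu(G_2-\bar{G})/(1-\alpha)=1$, so one normalizer disappears; the surviving $(1-\alpha)$ prefactor then cancels the remaining $1/\mu(G_1-\bar{G})=1/(1-\alpha)$, collapsing the product term to $\int_{\cV}(G_1-\bar{G})(v)\,M_{n,1}(\varphi)(u_1)\,\mu(dv)$.

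Adding the two contributions, the $\bar{G}$'s cancel through $\bar{G}+(G_1-\bar{G})=G_1$, yielding $\int_{\cV}G_1(u_1)\,M_{n,1}(\varphi)(u_1)\,\mu(dv)=\mu_1(G_{n-1}M_{n,1}(\varphi))/\mu_1(G_{n-1})$. Invoking the hypothesis $\mu_1=\eta_{n-1,1}$ identifies this with $\Phi_{n,1}(\eta_{n-1,1})(\varphi)=\eta_{n,1}(\varphi)$, using the recursion $\eta_{n,1}=\Phi_{n,1}(\eta_{n-1,1})$; this is the claimed identity. For the final assertion I would argue by induction on $n$: the base case $\bar{\eta}_0=\eta_0$ gives $\bar{\eta}_{0,j}=\eta_{0,j}$, and if $\bar{\eta}_{n-1,j}=\eta_{n-1,j}$ for both $j$, then $\bar{\eta}_{n-1}$ satisfies the hypothesis of the proposition with $\mu_{n-1}=\bar{\eta}_{n-1}$, so $\bar{\eta}_{n,j}=\bar{\Phi}_n(\bar{\eta}_{n-1})(\varphi_j)=\eta_{n,j}(\varphi)$.

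I expect the only delicate point to be the bookkeeping of the normalizing constants in the independent term: one must recognize that integrating out the unused coordinate $\tilde{v}$ exactly consumes one factor of $(1-\alpha)$, so that the min--residual decomposition $\bar{G}+(G_1-\bar{G})=G_1$ reassembles the full marginal potential $G_1$ and leaves no leftover constants. Everything else, including the reduction of $M_n$ and $\bar{M}_n$ to the marginal kernel $M_{n,1}$, is routine.
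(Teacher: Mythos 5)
Your proof is correct: since $\mu_{n-1}(G_{n-1,j,\mu_{n-1}})=1$ for $j\in\{1,2\}$, both normalizers in the independent term equal $1-\mu_{n-1}(\bar{G}_{n-1,\mu_{n-1}})$, the marginal property \eqref{eq:marginal_m} collapses both $M_n$ and $\bar{M}_n$ to $M_{n,j}$, and the min--residual decomposition reassembles the full weight, yielding $\Phi_{n,j}(\eta_{n-1,j})(\varphi)=\eta_{n,j}(\varphi)$ exactly as you describe, with the final claim following by your induction. Note that the paper gives no proof of its own --- it defers to \cite{ourmlpf} --- and your direct marginalization computation is essentially the argument of that reference, so your approach coincides with the intended one.
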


%

The point of the proposition is that if one has a system that samples 
$\bar{\eta}_0$, $\bar{\Phi}_1(\bar{\eta}_0)$ and so on, that marginally, one has exactly the marginals
$\eta_{n,j}$ at each time point. In practice one cannot do this, but rather  samples at time $0$
$$
\Big(\prod_{i=1}^N \bar{\eta}_0(dv_0^i)\Big).
$$
Writing the empirical measure of the samples as $\bar{\eta}^N_{0}$
one then samples
$$
\prod_{i=1}^N \bar{\Phi}_{p}(\bar{\eta}^N_{p-1})(dv_p^i).
$$
Again writing the empirical measure as $\bar{\eta}^N_{1}$ and so on, one
runs the following system:
$$
\Big(\prod_{i=1}^N \bar{\eta}_0(dv_0^i)\Big)\Big(\prod_{p=1}^n \prod_{i=1}^N \bar{\Phi}_{p}(\bar{\eta}^N_{p-1})(dv_p^i)\Big)
$$
which is exactly one pair of particle filters at a given level of the MLPF.

$\eta_{n,1}$ (and its approximation) will represent the predictor at time $n$ for a `fine' level and
$\eta_{n,2}$ (and its approximation) will represent the predictor at time $n$ for a `coarse' level.
The time index here is shifted backwards one, relative to the main text 
and this whole section only considers one coupled particle filter. This is all that is required due to the independence of
the particle filters.

\section{Normalizing Constant: Unbiased Estimator}

Note the following
$$
\gamma_{n,j}^N(1) = \prod_{p=0}^{n-1} \bar{\eta}_{p,j}^N(G_p)  
$$
to estimate $\gamma_{n,j}(1)$ ($p(y_{1:n})$ in the main text; recall the subscript $j\in\{1,2\}$  has 1 as the fine, 2 the coarse). This estimate is unbiased as proved in \cite{ourmlpf}.
We will be considering the analysis of
$$
\gamma_{n,1}^N(1) - \gamma_{n,2}^N(1).
$$
In the assumptions below $G_n$ is exactly $G(x_n,y_n)$ in the main text. $M_{n,1}$  (resp.~$M_{n,1}$) is simply
the finer (resp.~coarser) Euler discretized Markov transition (there is no time parameter for the transition kernel in the
main text).

\begin{hypA}
\label{hyp:A}
There exist $c>1$ and $C>0$, such that for all $n\geq 0$, $x,x' \in \mathcal{U}$
\begin{itemize}
\item[{\rm (i)}] {\bf boundedness}: $c^{-1} < G_n(x) < c$; 
\item[{\rm (ii)}] {\bf globally Lipschitz}: $|G_n(x) - G_n(x')| \leq C |x-x'|$.
\end{itemize}
\end{hypA}

\begin{hypA}
\label{hyp:B}
There exists a 
$C>0$ such that for each $u,u'\in \mathcal{U}$, $j\in\{1,2\}$ and $\varphi\in\mathcal{B}_b(\mathcal{U})\cap\textrm{Lip}(\mathcal{U})$
$$
|M_{n,j}(\varphi)(u) - M_{n,j}(\varphi)(u')| \leq C_n\|\varphi\|~|u-u'|.
$$
\end{hypA}

Let
\begin{eqnarray}
\label{eq:beea}
B(n) & = & \Big(\sum_{p=0}^n \{\mathbb{E}[\{|U_{p,1}^1-U_{p,2}^1|\wedge 1 \}
^{2}]^{1/2} +\|\eta_{p,1}-\eta_{p,2}\|_{\textrm{tv}}\}
\\ & &
+ \sum_{p=1}^n|||M_{p,1}-M_{p,2}|||\Big)^2
\nonumber
\end{eqnarray}
where $\mathbb{E}$ is expectation w.r.t.~the law associated to the algorithm described in this appendix.
Let $\overline{B}(0) = C B(0)$ and for $n\geq 1$
\begin{equation}
\overline{B}(n) = C(n)[B(n-1) + \overline{B}(n-1) + \|\eta_{n-1,1}-\eta_{n-1,2}\|_{\textrm{tv}}^2 + (\gamma_{n-1,1}(1)-\gamma_{n-1,2}(1))^2]\label{eq:b_rec}
\end{equation}
where $C(n)$ is a constant depending upon $n$.

\begin{prop}\label{prop:main_control}
Assume (A\ref{hyp:A}-\ref{hyp:B}). Then for any $n\geq 1$, $N\geq 1$:
$$
\mathbb{E}[([\gamma_{n,1}^N(1) - \gamma_{n,2}^N(1)] - [\gamma_{n,1}(1) - \gamma_{n,2}(1)] )^2] \leq \frac{\overline{B}(n)}{N}.
$$
\end{prop}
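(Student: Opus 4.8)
The plan is to argue by induction on $n$, exploiting the multiplicative structure of both the estimator and the true normalizing constant. Writing $\Gamma_{m,j}^N := \gamma_{m,j}^N(1)$, $\Gamma_{m,j} := \gamma_{m,j}(1)$, the definitions give the one-step recursions $\Gamma_{n,j}^N = \Gamma_{n-1,j}^N\,\bar{\eta}_{n-1,j}^N(G_{n-1})$ and $\Gamma_{n,j} = \Gamma_{n-1,j}\,\eta_{n-1,j}(G_{n-1})$. Abbreviating $g_j^N := \bar{\eta}_{n-1,j}^N(G_{n-1})$ and $g_j := \eta_{n-1,j}(G_{n-1})$, a direct computation yields the exact (algebraically verifiable) decomposition
\begin{align*}
(\Gamma_{n,1}^N - \Gamma_{n,2}^N) - (\Gamma_{n,1} - \Gamma_{n,2})
&= \Gamma_{n-1,1}^N\big[(g_1^N - g_2^N) - (g_1 - g_2)\big]
+ (\Gamma_{n-1,1}^N - \Gamma_{n-1,1})(g_1 - g_2) \\
&\quad + (\Gamma_{n-1,1}^N - \Gamma_{n-1,2}^N)(g_2^N - g_2) \\
&\quad + \big[(\Gamma_{n-1,1}^N - \Gamma_{n-1,2}^N) - (\Gamma_{n-1,1} - \Gamma_{n-1,2})\big]\, g_2 .
\end{align*}
I would bound each of the four summands in $L^2$ and recombine by Minkowski's inequality, the constant $C(n)$ in \eqref{eq:b_rec} absorbing the factor from $(\sum_{i=1}^4 a_i)^2 \le 4\sum_{i=1}^4 a_i^2$.

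A crucial preliminary is that, by boundedness (A\ref{hyp:A})(i), every factor $\bar{\eta}_{p,j}^N(G_p)$ and $\eta_{p,j}(G_p)$ lies in $(c^{-1},c)$, so $\Gamma_{m,j}^N$, $\Gamma_{m,j}$, their differences, and the increments $g_j^N-g_j$ are all bounded \emph{deterministically} by powers of $c$; this is what lets the whole argument stay in $L^2$ and avoid any higher-moment control. The four terms then map precisely onto the recursion \eqref{eq:b_rec}: the first has bounded prefactor $\Gamma_{n-1,1}^N$ times the coupled single-level increment error applied to the bounded $G_{n-1}$, whose $L^2$ norm is controlled by $C\,B(n-1)/N$ via the filter-level analysis of \cite{ourmlpf} (this produces the $B(n-1)$ contribution); the second uses $|g_1-g_2|\le c\,\|\eta_{n-1,1}-\eta_{n-1,2}\|_{\textrm{tv}}$ together with the classical single-level normalizing-constant bound $\mathbb{E}[(\Gamma_{n-1,1}^N-\Gamma_{n-1,1})^2]=\cO(1/N)$ (see \cite{ourmlpf,delm:04}), giving the $\|\eta_{n-1,1}-\eta_{n-1,2}\|_{\textrm{tv}}^2$ term; the third is split as $(\Gamma_{n-1,1}-\Gamma_{n-1,2})+\{(\Gamma_{n-1,1}^N-\Gamma_{n-1,2}^N)-(\Gamma_{n-1,1}-\Gamma_{n-1,2})\}$, where the deterministic piece times $\|g_2^N-g_2\|_2=\cO(N^{-1/2})$ produces the $(\gamma_{n-1,1}(1)-\gamma_{n-1,2}(1))^2$ term and the random piece, being bounded, is folded into the inductive contribution; and the fourth is the inductive quantity times the bounded $g_2\le c$, giving the $\overline{B}(n-1)$ term via the induction hypothesis.

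For the base case $n=1$ one has $\Gamma_{1,j}^N=\bar{\eta}_{0,j}^N(G_0)$ with the initial pairs drawn i.i.d.\ from the coupled $\bar{\eta}_0$, so the claim reduces to a direct variance computation for an i.i.d.\ average, closed using the Lipschitz property (A\ref{hyp:A})(ii) applied to $|G_0(U_{0,1})-G_0(U_{0,2})|$ to generate the $\mathbb{E}[|U_{0,1}^1-U_{0,2}^1|^2]^{1/2}$ content of $B(0)$. Collecting the four $L^2$ bounds and squaring then reproduces exactly the right-hand side of \eqref{eq:b_rec}, closing the induction.

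The genuinely substantive input is the coupled single-level increment estimate $\mathbb{E}\big[\,|(g_1^N-g_2^N)-(g_1-g_2)|^2\,\big]\le C\,B(n-1)/N$, which quantifies how the coupled resampling forces the two marginal empirical measures together and is where the constant $B(\cdot)$—with its $\mathbb{E}[|U_{p,1}^1-U_{p,2}^1|^2]^{1/2}$, total-variation, and $|||M_{p,1}-M_{p,2}|||$ contributions—originates; this is the technical heart, and I would import it from the filter-level results of \cite{ourmlpf} rather than re-derive it. The new work, and the only delicate bookkeeping, is therefore to confirm that the cross terms in the product representation collapse exactly onto the four listed contributions with an $N$-independent constant $C(n)$, so that the recursion \eqref{eq:b_rec} is matched term by term rather than merely up to an unspecified constant.
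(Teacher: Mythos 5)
Your proposal is correct and takes essentially the same route as the paper: induction on $n$ using the one-step multiplicative structure, bounding four contributions that map exactly onto the recursion \eqref{eq:b_rec} --- the coupled filter-level increment bound from \cite{ourmlpf} giving $B(n-1)$, the single-level normalizing-constant bound paired with $|g_1-g_2|\leq c\,\|\eta_{n-1,1}-\eta_{n-1,2}\|_{\textrm{tv}}$, the $(\gamma_{n-1,1}(1)-\gamma_{n-1,2}(1))^2$ term, and the inductive $\overline{B}(n-1)$ term --- all made rigorous by the deterministic boundedness of the weights under (A\ref{hyp:A})(i). The only cosmetic differences are that your four-term algebraic split reorganizes the paper's two-term split (each halved there by the $C_2$-inequality), and you verify the $n=1$ base case by a direct i.i.d.\ variance computation rather than citing the corresponding result of \cite{ourmlpf}.
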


\begin{proof}
Throughout $C(n)$ is a constant that depends on $n$ whose value may change from line to line.
We prove the result by induction on $n$. The case $n=1$ follows by \cite[Theorem C.2]{ourmlpf}, so we go immediately to the case of a general $n>1$ and
assuming the result at $n-1$. We have
$$
[\gamma_{n,1}^N(1) - \gamma_{n,2}^N(1)] - [\gamma_{n,1}(1) - \gamma_{n,2}(1)]  = 
$$
$$
\prod_{p=0}^{n-2}\eta_{p,1}^N(G_p)[\eta_{n-1,1}^N(G_{n-1})-\eta_{n-1,2}^N(G_{n-1})]  +
\eta_{n-1,2}^N(G_{n-1})[\prod_{p=0}^{n-2}\eta_{p,1}^N(G_p)-\prod_{p=0}^{n-2}\eta_{p,2}^N(G_p)]  
$$
$$
-
\prod_{p=0}^{n-2}\eta_{p,1}(G_p)[\eta_{n-1,1}(G_{n-1})-\eta_{n-1,2}(G_{n-1})] +
\eta_{n-1,2}(G_{n-1})[\prod_{p=0}^{n-2}\eta_{p,1}(G_p)-\prod_{p=0}^{n-2}\eta_{p,2}(G_p)] 
$$
\begin{equation}
 =  T_1^N + T_2^N - (T_1 - T_2).\label{eq:main_decomp}
\end{equation}
By the $C_2-$inequality we can consider bounding $\mathbb{E}[(T_1^N-T_1)^2]$ and $\mathbb{E}[(T_2^N-T_2)^2]$ respectively in \eqref{eq:main_decomp}.

\noindent\textbf{Term} $\mathbb{E}[(T_1^N-T_1)^2]$.\\
We have
$$
\mathbb{E}[(T_1^N-T_1)^2] \leq 
$$
$$
2\mathbb{E}[(\gamma_{n-2,1}^N(1)[(\eta_{n-1,1}^N(G_{n-1})-\eta_{n-1,2}^N(G_{n-1}))-(\eta_{n-1,1}(G_{n-1})-\eta_{n-1,2}(G_{n-1}))])^2] +
$$
$$
2(\eta_{n-1,1}(G_{n-1})-\eta_{n-1,2}(G_{n-1}))^2\mathbb{E}[(\gamma_{n-2,1}^N(1)-\gamma_{n-2,1}(1))^2].
$$
The almost sure-boundedness of $\gamma_{n-2,1}^N(1)$ and \cite[Theorem C.2]{ourmlpf} means that
$$
\mathbb{E}[(\gamma_{n-2,1}^N(1)[(\eta_{n-1,1}^N(G_{n-1})-\eta_{n-1,2}^N(G_{n-1}))-(\eta_{n-1,1}(G_{n-1})-\eta_{n-1,2}(G_{n-1}))])^2] \leq C(n) \frac{B(n-1)}{N}.
$$
Proposition \ref{prop:tech_res} along with (A\ref{hyp:A}) gives
$$
(\eta_{n-1,1}(G_{n-1})-\eta_{n-1,2}(G_{n-1}))^2\mathbb{E}[(\gamma_{n-2,1}^N(1)-\gamma_{n-2,1}(1))^2] \leq \|\eta_{n-1,1}-\eta_{n-1,2}\|_{\textrm{tv}}^2
\frac{C(n)}{N}.
$$
Hence
$$
\mathbb{E}[(T_1^N-T_1)^2] \leq C(n) \Big[\frac{B(n-1)}{N} + \|\eta_{n-1,1}-\eta_{n-1,2}\|_{\textrm{tv}}^2\frac{1}{N}\Big].
$$

\noindent\textbf{Term} $\mathbb{E}[(T_2^N-T_2)^2]$.\\
We have
$$
\mathbb{E}[(T_2^N-T_2)^2] \leq 
$$
$$
2\mathbb{E}[\eta_{n-1,2}^N(G_{n-1})^2([\gamma_{n-1,1}^N(1) - \gamma_{n-1,2}^N(1)] - [\gamma_{n-1,1}(1) - \gamma_{n-1,2}(1)] )^2] +
$$
$$
2[\gamma_{n-1,1}(1) - \gamma_{n-1,2}(1)]^2\mathbb{E}[(\eta_{n-1,2}^N(G_{n-1})-\eta_{n-1,2}(G_{n-1}))^2].
$$
By (A\ref{hyp:A}) and the induction hypothesis
$$
\mathbb{E}[\eta_{n-1,2}^N(G_{n-1})^2([\gamma_{n-1,1}^N(1) - \gamma_{n-1,2}^N(1)] - [\gamma_{n-1,1}(1) - \gamma_{n-1,2}(1)] )^2] \leq 
C(n) \frac{\overline{B}(n-1)}{N}.
$$
By \cite[Proposition C.1]{ourmlpf}
$$
[\gamma_{n-1,1}(1) - \gamma_{n-1,2}(1)]^2\mathbb{E}[(\eta_{n-1,2}^N(G_{n-1})-\eta_{n-1,2}(G_{n-1}))^2] \leq
[\gamma_{n-1,1}(1) - \gamma_{n-1,2}(1)]^2\frac{C(n)}{N}.
$$
Hence
$$
\mathbb{E}[(T_2^N-T_2)^2] \leq  C(n)\Big[\frac{\overline{B}(n-1)}{N}+[\gamma_{n-1,1}(1) - \gamma_{n-1,2}(1)]^2\frac{1}{N}\Big].
$$
From here one can conclude the proof.
\end{proof}

\begin{prop}\label{prop:tech_res}
Assume (A\ref{hyp:A}-\ref{hyp:B}). Then for any $n\geq 1$ there exist a $C(n)<+\infty$ such that for any $N\geq 1$, $j\in\{1,2\}$
$$
\mathbb{E}[(\gamma_{n,j}^N(1) - \gamma_{n,j}(1))^2] \leq \frac{C(n)}{N}.
$$
\end{prop}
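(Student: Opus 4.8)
The plan is to argue by induction on $n$, using the product structure $\gamma_{n,j}^N(1)=\prod_{p=0}^{n-1}\eta_{p,j}^N(G_p)$ together with the almost-sure boundedness of the potentials furnished by (A\ref{hyp:A})(i). The base case $n=1$ is immediate: here $\gamma_{1,j}^N(1)-\gamma_{1,j}(1)=\eta_{0,j}^N(G_0)-\eta_{0,j}(G_0)$ is the Monte Carlo error of an average of $N$ i.i.d.\ draws from the time-$0$ law, and since $c^{-1}<G_0<c$ its second moment is $\mathcal{O}(1/N)$ (equivalently, invoke \cite[Proposition C.1]{ourmlpf} at time $0$).

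For the inductive step I would use the one-step factorizations $\gamma_{n,j}^N(1)=\gamma_{n-1,j}^N(1)\,\eta_{n-1,j}^N(G_{n-1})$ and $\gamma_{n,j}(1)=\gamma_{n-1,j}(1)\,\eta_{n-1,j}(G_{n-1})$ to split the error exactly as in the $T_2^N$ computation in the proof of Proposition \ref{prop:main_control}:
$$
\gamma_{n,j}^N(1)-\gamma_{n,j}(1)=\gamma_{n-1,j}^N(1)\big[\eta_{n-1,j}^N(G_{n-1})-\eta_{n-1,j}(G_{n-1})\big]+\eta_{n-1,j}(G_{n-1})\big[\gamma_{n-1,j}^N(1)-\gamma_{n-1,j}(1)\big].
$$
Applying the $C_2$-inequality reduces $\mathbb{E}[(\gamma_{n,j}^N(1)-\gamma_{n,j}(1))^2]$ to controlling the two resulting terms.

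For the first term I would exploit that each factor $\eta_{p,j}^N(G_p)$ is a (weighted) average of values of $G_p$, hence lies in $(c^{-1},c)$ by (A\ref{hyp:A})(i); consequently $\gamma_{n-1,j}^N(1)\leq c^{\,n-1}$ almost surely. Pulling this deterministic bound out and applying the marginal $L_2$ estimate \cite[Proposition C.1]{ourmlpf} to $\varphi=G_{n-1}$ (which is bounded, again by (A\ref{hyp:A})(i)) yields a bound of order $C(n)/N$. For the second term, $\eta_{n-1,j}(G_{n-1})<c$ is bounded and the induction hypothesis gives $\mathbb{E}[(\gamma_{n-1,j}^N(1)-\gamma_{n-1,j}(1))^2]\leq C(n-1)/N$. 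Adding the two bounds closes the induction with a fresh constant $C(n)$.

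The one point needing care is the legitimacy of applying \cite[Proposition C.1]{ourmlpf} to the $j$-marginal of the \emph{coupled} MLPF system rather than to a standalone particle filter. This is supplied by Proposition \ref{prop:marginal}, which identifies the $j$-marginal of the coupled process with the ordinary particle filter approximation of $\eta_{n,j}$; hence both the unbiasedness of $\gamma_{n,j}^N(1)$ and the $\mathcal{O}(1/N)$ marginal $L_2$ bounds transfer verbatim. Once this identification is granted the remainder is routine, being a simplified, single-marginal version of the $T_2^N$ estimate already carried out for Proposition \ref{prop:main_control}.
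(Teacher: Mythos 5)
Your proposal is correct and follows essentially the same route as the paper's own proof: the identical one-step factorization $\gamma_{n,j}^N(1)=\gamma_{n-1,j}^N(1)\,\eta_{n-1,j}^N(G_{n-1})$, the $C_2$-inequality, the almost-sure boundedness of $\gamma_{n-1,j}^N(1)$ via (A\ref{hyp:A})(i) together with \cite[Proposition C.1]{ourmlpf} for the first term, and the induction hypothesis for the second. Your added remarks (the explicit bound $c^{n-1}$ and the appeal to Proposition \ref{prop:marginal} to justify applying marginal particle-filter estimates to the coupled system) merely make explicit details the paper leaves implicit.
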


\begin{proof}
We prove the result by induction on $n$. The case $n=1$ follows by \cite[Proposition C.1]{ourmlpf}, so we go immediately to the case of a general $n>1$ and
assuming the result at $n-1$. We have
$$
\gamma_{n,j}^N(1) - \gamma_{n,j}(1) = \prod_{p=0}^{n-2}\eta_{p,j}^N(G_p)[\eta_{n-1,j}^N(G_{n-1})-\eta_{n-1,j}(G_{n-1})]
+ \eta_{n-1,j}(G_{n-1})[\gamma_{n-1,j}^N(1) - \gamma_{n-1,j}(1)].
$$
Thus, by the $C_2-$inequality:
$$
\mathbb{E}[(\gamma_{n,j}^N(1) - \gamma_{n,j}(1))^2] \leq 
$$
$$
2\mathbb{E}\Big[
\Big(\prod_{p=0}^{n-2}\eta_{p,j}^N(G_p)[\eta_{n-1,j}^N(G_{n-1})-\eta_{n-1,j}(G_{n-1})]\Big)^2
\Big] + 2\mathbb{E}\Big[
\Big(
\eta_{n-1,j}(G_{n-1})[\gamma_{n-1,j}^N(1) - \gamma_{n-1,j}(1)]
\Big)^2
\Big].
$$
Using the boundedness of the $\{G_p\}_{p\geq 0}$ and \cite[Proposition C.1]{ourmlpf} deals with the first term on the R.H.S.~of the inequality and
the induction hypothesis the second term.
\end{proof}

For the following result, it is assumed that $M_{n,1}$ and $M_{n,2}$ are induced by an Euler approximation and the discretization levels are $h/2$ and $h$. 

\begin{prop}\label{prop:euler}
Assume (A\ref{hyp:A}(i)). Then for any $n\geq 1$ there exist a $C(n)<+\infty$ such that for any $\varphi\in\mathcal{B}_b(\mathcal{U})$
$$
|\gamma_{n,1}(\varphi)-\gamma_{n,2}(\varphi)| \leq C(n) \sup_{u\in\mathcal{U}}|\varphi(u)|h.
$$
\end{prop}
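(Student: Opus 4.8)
The plan is to prove the estimate by induction on $n$, using the recursive Feynman--Kac structure of the unnormalized measures together with the classical weak error of the Euler scheme. The only ingredient specific to the discretization is a step-wise weak error bound: since $M_{n,1}$ and $M_{n,2}$ are Euler kernels at levels $h/2$ and $h$, each approximates the exact transition with a weak error of order $h$, so that for any bounded measurable $\psi$
$$
\sup_{u\in\mathcal{U}} |M_{n,1}(\psi)(u) - M_{n,2}(\psi)(u)| \leq C \sup_{u\in\mathcal{U}}|\psi(u)|\,h .
$$
I would isolate this as the key lemma; granting it, the remainder is an elementary manipulation of positive kernels.

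The recursion I exploit is $\gamma_{n,j}(\varphi) = \gamma_{n-1,j}\big(G_{n-1} M_{n,j}(\varphi)\big)$, which is immediate from the definition of $\gamma_{n,j}$. For the base case $n=1$, both levels are initialised at $x_0$ so $\eta_{0,1}=\eta_{0,2}=\eta_0$, giving $\gamma_{1,1}(\varphi)-\gamma_{1,2}(\varphi) = \eta_0\big(G_0[M_{1,1}(\varphi)-M_{1,2}(\varphi)]\big)$; together with (A\ref{hyp:A}(i)) and the weak error bound this is at most $c\,C\sup_u|\varphi(u)|\,h$.

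For the inductive step I would insert the cross term $\gamma_{n-1,1}\big(G_{n-1}M_{n,2}(\varphi)\big)$ and write
$$
\gamma_{n,1}(\varphi)-\gamma_{n,2}(\varphi) = \gamma_{n-1,1}\big(G_{n-1}[M_{n,1}(\varphi)-M_{n,2}(\varphi)]\big) + \big(\gamma_{n-1,1}-\gamma_{n-1,2}\big)\big(G_{n-1}M_{n,2}(\varphi)\big).
$$
The first term is bounded using that $\gamma_{n-1,1}$ is a positive measure with total mass $\gamma_{n-1,1}(1)\leq c^{n-1}$ (by (A\ref{hyp:A}(i))), that $G_{n-1}\leq c$, and the weak error bound applied to $\psi=\varphi$, contributing at most $c^{n}C\sup_u|\varphi(u)|\,h$. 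The second term is precisely the quantity in the proposition at time $n-1$ with the bounded measurable test function $\psi=G_{n-1}M_{n,2}(\varphi)$, whose supremum is at most $c\sup_u|\varphi(u)|$ since $M_{n,2}$ is Markov; the induction hypothesis bounds it by $c\,C(n-1)\sup_u|\varphi(u)|\,h$. Taking $C(n)=c^{n}C+c\,C(n-1)$ closes the induction.

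The main obstacle is the step-wise weak error itself \emph{for merely bounded measurable} $\psi$. For smooth $\psi$ this is the standard Talay--Tubaro expansion, but here $\varphi$ need only be bounded; recovering the order-one rate for irregular test functions relies on the uniform ellipticity of Assumption \ref{ass:diff}(i), which renders the Euler transition densities smooth and permits a Bally--Talay type argument (or a direct estimate via the backward Kolmogorov equation). Once that rate is in hand, note that the induction above never requires any regularity of the test function beyond boundedness, since at each stage the weak error is applied to a bounded measurable argument and the carried test function $G_{n-1}M_{n,2}(\varphi)$ remains bounded by $c\sup_u|\varphi(u)|$.
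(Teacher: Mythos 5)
Your proof is correct, but it takes a genuinely different route from the paper's. Both arguments are inductions on $n$ built around inserting a cross term, but the cross terms differ. The paper writes
$$
\gamma_{n,1}(\varphi)-\gamma_{n,2}(\varphi) = \gamma_{n-1,1}(G_{n-1})[\eta_{n,1}(\varphi)- \eta_{n,2}(\varphi)] + \eta_{n,2}(\varphi)[\gamma_{n-1,1}(G_{n-1})-\gamma_{n-1,2}(G_{n-1})],
$$
i.e.\ it passes through the \emph{normalized} predictors, bounds the first term by citing the companion paper's Lemma D.1, which gives the global-in-time bound $\|\eta_{n,1}-\eta_{n,2}\|_{\textrm{tv}}=\mathcal{O}(h)$ for the two discretized predictors, and handles the second term by the induction hypothesis applied to the test function $G_{n-1}$. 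You instead stay entirely at the level of unnormalized measures and one-step kernels, via $\gamma_{n,j}(\varphi)=\gamma_{n-1,j}(G_{n-1}M_{n,j}(\varphi))$, so that your local error term needs only the one-step weak-error/total-variation bound $\sup_u|M_{n,1}(\psi)(u)-M_{n,2}(\psi)(u)|\leq C\|\psi\|_{\infty}h$ for bounded measurable $\psi$, with the accumulated error propagated by applying the induction hypothesis to $G_{n-1}M_{n,2}(\varphi)$. What each buys: your argument is more self-contained and tracks constants explicitly ($C(n)=c^nC+cC(n-1)$), requiring only the primitive kernel-level estimate rather than total-variation stability of the discretized filters; the paper's is shorter because it outsources both the base case (to Del Moral--Jacod--Protter, eq.~2.4) and the predictor bound (to Lemma D.1 of the MLPF paper). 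You correctly identify that the bounded-measurable (rather than smooth) test-function rate is the deep ingredient, resting on uniform ellipticity via Bally--Talay type density estimates; this is exactly the same depth of input as the results the paper cites, so flagging it as an assumed lemma rather than proving it does not constitute a gap relative to the paper's own level of rigor.
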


\begin{proof}
We prove the result by induction on $n$. The case $n=1$ follows by \cite[eq.~2.4]{delm}, so we go immediately to the case of a general $n>1$ and
assuming the result at $n-1$. We have
$$
\gamma_{n,1}(\varphi)-\gamma_{n,2}(\varphi) = \gamma_{n-1,1}(G_{n-1})[\eta_{n,1}(\varphi)- \eta_{n,2}(\varphi)] + \eta_{n,2}(\varphi)[\gamma_{n-1,1}(G_{n-1})-\gamma_{n-1,2}(G_{n-1})].
$$
By \cite[Lemma D.1.]{ourmlpf} (assumption 4.2(i) of that paper holds for an Euler approximation) 
\begin{eqnarray*}
|\gamma_{n-1,1}(G_{n-1})[\eta_{n,1}(\varphi)- \eta_{n,2}(\varphi)]| & \leq & 2\sup_{u\in\mathcal{U}}|\varphi(u)|\gamma_{n-1,1}(G_{n-1})\|\eta_{n,1}(\varphi)- \eta_{n,2}\|_{\textrm{tv}} \\
& \leq & 2\sup_{u\in\mathcal{U}}|\varphi(u)|\gamma_{n-1,1}(G_{n-1}) h.
\end{eqnarray*}
The induction hypothesis yields 
$$
|\eta_{n,2}(\varphi)[\gamma_{n-1,1}(G_{n-1})-\gamma_{n-1,2}(G_{n-1})]| \leq \sup_{u\in\mathcal{U}}|\varphi(u)|C(n-1) \sup_{u\in\mathcal{U}}|G_{n-1}i(u)|h.
$$
The proof can then easily be completed.
\end{proof}

\begin{rem}\label{rem:cost}
In the Euler case, Proposition \ref{prop:euler} 
along with \cite[Lemma D.1.]{ourmlpf} and that $B(n)= \mathcal{O}(h^{1/2})$ (see \cite[Corollary D.1]{ourmlpf})
establishes that $\overline{B}(n) = \mathcal{O}(h^{1/2})$.
\end{rem}

\section{Normalizing Constant: Biased Estimator}\label{app:bias}

In order to follow this section, one must have read the previous sections of the appendix.
We now consider the case of the biased estimator. In this scenario, the full algorithm is
considered; that is, a single particle filter and $L$ coupled (but independent) particle filters. Let $n\geq 1$
be given.
We define $\gamma_{n,j}^l(1), j\in\{1,2\}$ as the normalizing constants associated
to level $l\in\{1,\dots,L\}$. We write $\gamma_{n,1}^1(1)$ as the normalizing constant
at the coarsest level. We set
$$
\gamma_{n,j}^{N_l}(1) = \prod_{p=0}^{n-1}\eta_{p,j}^{N_l}(G_p)
$$
with $j\in\{1,2\}$, $l\in\{1,\dots,L\}$, with an obvious extension to $\gamma_{n,1}^{N_0}(1)$.
We are to analyze the estimate:
$$
\gamma_{n,1}^{N_0}(1) \prod_{l=1}^L \frac{\gamma_{n,1}^{N_l}(1)}{\gamma_{n,2}^{N_l}(1)}.
$$
We denote by (A) that the assumptions (A1-2) in the previous section uniformly at each level
(where applicable).
We write $\overline{B}_l(n)$ to denote the level specific version of $\overline{B}(n)$ in the previous section.

\begin{prop}\label{theo:nc_bias_theo}
Assume (A). Then for any $n\geq 1$ there exist a $C(n)<+\infty$ such that for
any $L\geq 1$, $N_{0:L}\geq 1$ we have
$$
\mathbb{E}\Big[\Big(
\gamma_{n,1}^{N_0}(1) \prod_{l=1}^L \frac{\gamma_{n,1}^{N_l}(1)}{\gamma_{n,2}^{N_l}(1)} -
\gamma_{n,1}^{0}(1) \prod_{l=1}^L \frac{\gamma_{n,1}^{l}(1)}{\gamma_{n,2}^{l}(1)}
\Big)^2\Big] \leq 
$$
$$
C(n)\Big(\frac{1}{\sqrt{N}_0}
+ \sum_{l=1}^L\Big(
\frac{\overline{B}_l(n)^{1/2}}{\sqrt{N}_l} +
\frac{|\gamma_{n,1}^{l}(1)-\gamma_{n,2}^{l}(1)|}{\sqrt{N}_l}
\Big)
\Big)^2.
$$
\end{prop}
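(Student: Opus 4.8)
The plan is to reduce the statement to a product-of-ratios estimate. For $l\in\{1,\dots,L\}$ set $R_l^N := \gamma_{n,1}^{N_l}(1)/\gamma_{n,2}^{N_l}(1)$ and $R_l := \gamma_{n,1}^l(1)/\gamma_{n,2}^l(1)$, and put $R_0^N := \gamma_{n,1}^{N_0}(1)$, $R_0 := \gamma_{n,1}^0(1)$, so that the two quantities compared in the statement are $P_L^N := \prod_{l=0}^L R_l^N$ and $P_L := \prod_{l=0}^L R_l$, and the aim is to bound $\|P_L^N - P_L\|_2$ and then square. Two structural facts are recorded first. By (A\ref{hyp:A})(i) every factor $\eta_{p,j}^{N_l}(G_p)$ and $\eta_{p,j}(G_p)$ lies in $(c^{-1},c)$, so all of $\gamma_{n,j}^{N_l}(1),\gamma_{n,j}^l(1)$ lie in $(c^{-n},c^n)$ almost surely; in particular denominators are bounded below and ratios are bounded. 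Second, since the coarse law at level $k$ coincides with the fine law at level $k-1$, the true ratios telescope: $\prod_{k=0}^{l-1}R_k = \gamma_{n,1}^{l-1}(1)$ and $\prod_{k=l+1}^L R_k = \gamma_{n,1}^L(1)/\gamma_{n,1}^l(1)$, both bounded by $c^{2n}$ \emph{uniformly in $l$ and $L$}.

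For the per-level increment with $l\ge1$, I would write $a=\gamma_{n,1}^{N_l}(1)$, $b=\gamma_{n,2}^{N_l}(1)$, $\bar a=\gamma_{n,1}^l(1)$, $\bar b=\gamma_{n,2}^l(1)$ and use the difference-of-differences identity
$$
\frac ab-\frac{\bar a}{\bar b}=\frac{a-b}{b}-\frac{\bar a-\bar b}{\bar b}=\frac{\bar b\,[(a-b)-(\bar a-\bar b)]+(\bar a-\bar b)(\bar b-b)}{b\,\bar b},
$$
which, with the bounds above, gives $|R_l^N-R_l|\le c^n|(a-b)-(\bar a-\bar b)|+c^{2n}|\bar a-\bar b|\,|\bar b-b|$. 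Taking $L^2$ norms, the first term is controlled by Proposition \ref{prop:main_control} (its square is $\le\overline B_l(n)/N_l$) and the second by the deterministic quantity $|\gamma_{n,1}^l(1)-\gamma_{n,2}^l(1)|$ times $\|b-\bar b\|_2\le (C(n)/N_l)^{1/2}$ from Proposition \ref{prop:tech_res}; hence $\|R_l^N-R_l\|_2\le C(n)\big(\overline B_l(n)^{1/2}N_l^{-1/2}+|\gamma_{n,1}^l(1)-\gamma_{n,2}^l(1)|\,N_l^{-1/2}\big)$. The case $l=0$ is just Proposition \ref{prop:tech_res}, giving $\|R_0^N-R_0\|_2\le C(n)N_0^{-1/2}$. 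These are exactly the summands appearing in the claim.

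The crux is a second-moment bound on the empirical partial products that is uniform in $l,L$ and in $N_{0:L}\ge1$, namely $\mathbb{E}[(\prod_{k=0}^{l-1}R_k^N)^2]\le C(n)$. Since the $L+1$ coupled filters are independent, this expectation factorizes as $\mathbb{E}[(\gamma_{n,1}^{N_0}(1))^2]\prod_{k=1}^{l-1}\mathbb{E}[(R_k^N)^2]$, with the first factor $\le c^{2n}$ by Step~1. Writing $\mathbb{E}[(R_k^N)^2]=R_k^2+\epsilon_k$ with $\epsilon_k=2R_k\,\mathbb{E}[R_k^N-R_k]+\mathbb{E}[(R_k^N-R_k)^2]$, the product of the $R_k^2$ telescopes to $(\gamma_{n,1}^{l-1}(1)/\gamma_{n,1}^0(1))^2\le c^{4n}$, while $\prod_k(1+\epsilon_k/R_k^2)\le\exp(c^{4n}\sum_k|\epsilon_k|)$ since $R_k^2\ge c^{-4n}$. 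By Step~2, Jensen ($|\mathbb{E}[R_k^N-R_k]|\le\|R_k^N-R_k\|_2$) and Remark \ref{rem:cost} together with Proposition \ref{prop:euler} (which give $\overline B_k(n)=\mathcal{O}(h_k^{1/2})$ and $|\gamma_{n,1}^k(1)-\gamma_{n,2}^k(1)|=\mathcal{O}(h_k)$ with $h_k=2^{-k}$), one obtains $|\epsilon_k|\le C(n)h_k^{1/4}$ for all $N_k\ge1$, so $\sum_k|\epsilon_k|\le C(n)\sum_k 2^{-k/4}<\infty$ \emph{independently of $L$}. This is the main obstacle, and it is exactly what prevents the product estimate from degenerating into a useless bound of the form $C(n)^L$: it is the geometric decay of the strong-error constant $\overline B_k(n)$ and of $|\gamma_{n,1}^k(1)-\gamma_{n,2}^k(1)|$ that renders the excess second moments summable.

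Finally I would assemble the estimate through the product telescoping identity
$$
P_L^N-P_L=\sum_{l=0}^L\Big(\prod_{k=0}^{l-1}R_k^N\Big)(R_l^N-R_l)\Big(\prod_{k=l+1}^L R_k\Big),
$$
apply Minkowski's inequality, and bound each term: the trailing factor $\prod_{k>l}R_k$ is deterministic and $\le c^{2n}$ (Step~1), while $\prod_{k<l}R_k^N$ depends only on levels $0,\dots,l-1$ and is therefore independent of $R_l^N-R_l$, so the $L^2$ norm of the $l$-th term is at most $c^{2n}\,\mathbb{E}[(\prod_{k<l}R_k^N)^2]^{1/2}\,\|R_l^N-R_l\|_2\le C(n)\|R_l^N-R_l\|_2$ by Step~3. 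Summing over $l$ and inserting the per-level bounds of Step~2 yields
$$
\|P_L^N-P_L\|_2\le C(n)\Big(N_0^{-1/2}+\sum_{l=1}^L\big(\overline B_l(n)^{1/2}N_l^{-1/2}+|\gamma_{n,1}^l(1)-\gamma_{n,2}^l(1)|\,N_l^{-1/2}\big)\Big),
$$
and squaring gives the stated inequality.
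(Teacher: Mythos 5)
Your proposal follows essentially the same route as the paper's proof: the same telescoping decomposition of the difference of products across levels (you place the empirical partial products before the increment and the exact ones after, the paper does the reverse; this is immaterial, since either way the stray empirical factors are independent of the increment by the independence of the coupled filters), then Minkowski's inequality, and then per-level control of the ratio error via exactly the same difference-of-ratios identity, Proposition \ref{prop:main_control}, Proposition \ref{prop:tech_res}, and the almost-sure bounds $c^{-n}<\gamma_{n,j}^{\cdot}(1)<c^{n}$ forced by (A\ref{hyp:A})(i); this reproduces the paper's bounds \eqref{eq:cont_ft} and \eqref{eq:sum_cont_bias} essentially verbatim. Where you genuinely depart is your Step 3. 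The paper disposes of the factors $\prod_{s=l+1}^{L}\mathbb{E}\big[\big(\gamma_{n,1}^{N_s}(1)/\gamma_{n,2}^{N_s}(1)\big)^2\big]^{1/2}$ with a bare appeal to ``the bounded property of the $\{G_n\}$'', which per factor gives $c^{2n}$ and hence a product bound of order $c^{2n(L-l)}$ --- not uniform in $L$, even though the statement requires $C(n)$ to be independent of $L$. You correctly identify this as the crux and prove a uniform-in-$L$ second-moment bound by factorizing over independent levels, telescoping the exact ratios, and showing the excess second moments $\epsilon_k$ are summable; that is a real strengthening of the paper's written argument. The price is that your summability uses $\overline{B}_k(n)=\mathcal{O}(h_k^{1/2})$ and $|\gamma_{n,1}^k(1)-\gamma_{n,2}^k(1)|=\mathcal{O}(h_k)$ from Remark \ref{rem:cost} and Proposition \ref{prop:euler}, i.e.\ Euler-specific decay that is not a consequence of assumption (A) alone; so, strictly, you have proved the proposition in the Euler setting where it is applied, while under bare (A) the uniformity in $L$ remains unproved --- by you and by the paper alike. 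If you want the statement exactly as claimed, you should either add the decay of $\overline{B}_l(n)$ and of $|\gamma_{n,1}^{l}(1)-\gamma_{n,2}^{l}(1)|$ in $l$ to the hypotheses, or accept a constant depending on $L$.
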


\begin{proof}
Note that, 
$$
\gamma_{n,1}^{N_0}(1) \prod_{l=1}^L \frac{\gamma_{n,1}^{N_l}(1)}{\gamma_{n,2}^{N_l}(1)} -
\gamma_{n,1}^{0}(1) \prod_{l=1}^L \frac{\gamma_{n,1}^{l}(1)}{\gamma_{n,2}^{l}(1)}
=
$$
$$
(\gamma_{n,1}^{N_0}(1)-\gamma_{n,1}^{0}(1))\prod_{l=1}^L \frac{\gamma_{n,1}^{N_l}(1)}{\gamma_{n,2}^{N_l}(1)}
+
\sum_{l=1}^L\Bigg(
\gamma_{n,1}^{0}(1) \prod_{t=1}^{l-1} \frac{\gamma_{n,1}^{t}(1)}{\gamma_{n,2}^{t}(1)}
\Big(
\frac{\gamma_{n,1}^{N_l}(1)}{\gamma_{n,2}^{N_l}(1)} - 
\frac{\gamma_{n,1}^{l}(1)}{\gamma_{n,2}^{l}(1)}
\Big)
\prod_{s=l+1}^{L} 
\frac{\gamma_{n,1}^{N_s}(1)}{\gamma_{n,2}^{N_s}(1)}
\Bigg).
$$
So by Minkowski
$$
\mathbb{E}\Big[\Big(
\gamma_{n,1}^{N_0}(1) \prod_{l=1}^L \frac{\gamma_{n,1}^{N_l}(1)}{\gamma_{n,2}^{N_l}(1)} -
\gamma_{n,1}^{0}(1) \prod_{l=1}^L \frac{\gamma_{n,1}^{l}(1)}{\gamma_{n,2}^{l}(1)}
\Big)^2\Big] \leq 
$$
$$
\Bigg(\mathbb{E}[(\gamma_{n,1}^{N_0}(1)-\gamma_{n,1}^{0}(1))^2]^{1/2}
\prod_{l=1}^L \mathbb{E}\Big[\frac{\gamma_{n,1}^{N_l}(1)^2}{\gamma_{n,2}^{N_l}(1)^2}\Big]^{1/2} +
$$
\begin{equation}
\sum_{l=1}^L\Bigg(
\gamma_{n,1}^{0}(1) \prod_{t=1}^{l-1} \frac{\gamma_{n,1}^{t}(1)}{\gamma_{n,2}^{t}(1)}
\mathbb{E}\Big[
\Big(
\frac{\gamma_{n,1}^{N_l}(1)}{\gamma_{n,2}^{N_l}(1)} - 
\frac{\gamma_{n,1}^{l}(1)}{\gamma_{n,2}^{l}(1)}
\Big)^2
\Big]^{1/2}
\prod_{s=l+1}^{L} \mathbb{E}\Big[
\Big(\frac{\gamma_{n,1}^{N_s}(1)}{\gamma_{n,2}^{N_s}(1)}\Big)^2
\Big]^{1/2}\Bigg)
\Bigg)^2.\label{eq:main_bias}
\end{equation}
By standard results in SMC:
\begin{equation}
\label{eq:cont_ft}
\mathbb{E}[(\gamma_{n,1}^{N_0}(1)-\gamma_{n,1}^{0}(1))^2]^{1/2} \leq 
\frac{C(n)}{\sqrt{N_0}}.
\end{equation}
Now
$$
\mathbb{E}\Big[
\Big(
\frac{\gamma_{n,1}^{N_l}(1)}{\gamma_{n,2}^{N_l}(1)} - 
\frac{\gamma_{n,1}^{l}(1)}{\gamma_{n,2}^{l}(1)}
\Big)^2
\Big]^{1/2} = 
$$
$$
\mathbb{E}\Big[
\frac{1}{[\gamma_{n,2}^{N_l}(1)\gamma_{n,2}^{l}(1)]^2}
\Big(\gamma_{n,2}^{l}(1)[\gamma_{n,1}^{N_l}(1)-\gamma_{n,2}^{N_l}(1)
-(\gamma_{n,1}^{l}(1)-\gamma_{n,2}^{l}(1))
] +
$$
$$
(\gamma_{n,1}^{l}(1)-\gamma_{n,2}^{l}(1))
[\gamma_{n,2}^{l}(1)-\gamma_{n,2}^{N_l}(1)]
\Big)^2
\Big]^{1/2}.
$$
So we have by Minkowski and the bounded property of the $\{G_n\}_{n\geq 0}$:
$$
\mathbb{E}\Big[
\Big(
\frac{\gamma_{n,1}^{N_l}(1)}{\gamma_{n,2}^{N_l}(1)} - 
\frac{\gamma_{n,1}^{l}(1)}{\gamma_{n,2}^{l}(1)}
\Big)^2
\Big]^{1/2} \leq
$$
$$
C(n)\Big(
\mathbb{E}[[\gamma_{n,1}^{N_l}(1)-\gamma_{n,2}^{N_l}(1)
-(\gamma_{n,1}^{l}(1)-\gamma_{n,2}^{l}(1))
]^2]^{1/2} + |\gamma_{n,1}^{l}(1)-\gamma_{n,2}^{l}(1)|
\mathbb{E}[(\gamma_{n,2}^{l}(1)-\gamma_{n,2}^{N_l}(1))^2]^{1/2}
\Big).
$$
Applying Proposition \ref{prop:main_control} and \ref{prop:tech_res} to the two expectations we obtain
\begin{equation}
\mathbb{E}\Big[
\Big(
\frac{\gamma_{n,1}^{N_l}(1)}{\gamma_{n,2}^{N_l}(1)} - 
\frac{\gamma_{n,1}^{l}(1)}{\gamma_{n,2}^{l}(1)}
\Big)^2
\Big]^{1/2} \leq
C(n)\Big(
\frac{\overline{B}_l(n)^{1/2}}{\sqrt{N}_l} +
\frac{|\gamma_{n,1}^{l}(1)-\gamma_{n,2}^{l}(1)|}{\sqrt{N}_l}
\Big).
\label{eq:sum_cont_bias}
\end{equation}
Conbining \eqref{eq:main_bias} with \eqref{eq:cont_ft} and \eqref{eq:sum_cont_bias}
along with 
the bounded property of the $\{G_n\}_{n\geq 0}$
allows one to conclude
\end{proof}

\end{document}